\newcommand{\paratitle}[1]{\vspace{0.5ex}\noindent {\textbf{#1}}}
\newtheorem{theorem}{\bf Theorem}
\newcommand{\ie}{\emph{i.e.,}\xspace}
\newcommand{\eg}{\emph{e.g.,}\xspace}
\newcommand{\etc}{\emph{etc.}\xspace}
\newcommand{\eat}[1]{}
\begin{document}

\title{Pocket Diagnosis: Secure Federated Learning against Poisoning Attack in the Cloud}

\author{Zhuoran~Ma, Jianfeng~Ma, Yinbin~Miao,
        Ximeng~Liu,~\IEEEmembership{Member,~IEEE}, Kim-Kwang Raymond~Choo,~\IEEEmembership{Senior Member,~IEEE,} and Robert H. Deng,~\IEEEmembership{Fellow,~IEEE}
\IEEEcompsocitemizethanks{
\IEEEcompsocthanksitem Z. Ma, J, Ma and Y. Miao (corresponding author) are with the School of Cyber Engineering, Xidian University, Xi'an 710071, China; Guangxi Key Laboratory of Cryptography and Information Security, Guilin 541004, China; Shaanxi Key Laboratory of Network and System Security, Xidian University, Xi'an 710071, China. E-mail: emmazhr@163.com, jfma@mail.xidian.edu.cn, ybmiao@xidian.edu.cn
\IEEEcompsocthanksitem X. Liu is with the Key Laboratory of Information Security of Network Systems, College of Mathematics and Computer Science, Fuzhou University, Fuzhou 350108, China. Email: snbnix@gmail.com
\IEEEcompsocthanksitem K.-K. R. Choo is with the Department of Information Systems and Cyber Security, The University of Texas at San Antonio, San Antonio, TX 78249 USA. Email: raymond.choo@fulbrightmail.org
\IEEEcompsocthanksitem R. H. Deng is with the School of Information Systems, Singapore Management University, Singapore 188065. Email: robertdeng@smu.edu.sg}
}


\IEEEtitleabstractindextext{
\begin{abstract}
    Federated learning has become prevalent in medical diagnosis due to its effectiveness in training a federated model among multiple health institutions (\ie Data Islands (DIs)). However, increasingly massive DI-level poisoning attacks have shed light on a vulnerability in federated learning, which inject poisoned data into certain DIs to corrupt the availability of the federated model. Previous works on federated learning have been inadequate in ensuring the privacy of DIs and the availability of the final federated model. In this paper, we design a secure federated learning mechanism with multiple keys to prevent DI-level poisoning attacks for medical diagnosis, called SFPA. Concretely, SFPA provides privacy-preserving random forest-based federated learning by using the multi-key secure computation, which guarantees the confidentiality of DI-related information. Meanwhile, a secure defense strategy over encrypted locally-submitted models is proposed to defense DI-level poisoning attacks. Finally, our formal security analysis and empirical tests on a public cloud platform demonstrate the security and efficiency of SFPA as well as its capability of resisting DI-level poisoning attacks.
\end{abstract}

\begin{IEEEkeywords}
Federated learning, medical diagnosis, poisoning attacks, secure computation, secure defense.
\end{IEEEkeywords}}

\maketitle

\IEEEraisesectionheading{\section{Introduction}}

\lettrine[lines=2]{W}{ith} the exponential growth of demands for medical diagnosis, machine learning has been ubiquitously used to drive critical decisions in telemedicine diagnosis~\cite{kononenko2001machine,ZXLYZL18,wang2019privacy}. One depressing situation in training a diagnosis model arises as relatively limited medical data are generated and collected in a single health organization~\cite{friedman2010achieving}. Most of medical data (\ie non-independent and identically distributed (non-i.i.d.) data~\cite{homem2008rates}) are distributed among multiple institutions in the form of ``Data Island" (DI). The training over local data will incur the underfitting in the predictive model as the training data is insufficient, thereby leading to incorrect medical diagnosis.

To embrace the advantages (\eg cost savings, access feasibility) of cloud computing, patients prefer to remotely transfer medical data for diagnosis at any time, as convenient as ``put a doctor into pockets", which is figuratively described as pocket diagnosis.
To improve a more accurate diagnosis model, all patients can outsource their individually-collected medical data to a cloud server, but will compromise the data privacy as the medical data generally contain much sensitive information such as age, sex and family heredity history.
It is urgently required for privacy protection~\cite{miao2019fair,miao2019optimized,miao2019secure,miao2019privacy}. With the ever-increasing awareness of data privacy~\cite{liu2018efficient,miao2019multi,miao2018enabling}, European Union enforced the GDPR~\cite{gdpr} to safeguard data security without roughly gathering all data from multiple DIs for traditional machine learning.

Motivated by privacy concerns of distributed learning in the cloud, the federated learning~\cite{Liu2018Secure,Dai2018Privacy,Segev2016Learn,brisimi2018federated} has been proposed as an efficient method to train a federated model by aggregating outsourced local models trained on multiple DIs in a privacy-preserving way. Since local models contain sensitive information of DIs, each DI encrypts individual local model under his/her key before outsourcing it to the cloud~\cite{aono2017privacy}.
Existing federated learning schemes are generally deployed in the single key setting~\cite{mcmahan2016communication,wang2019beyond,Liu2018Secure}, in which all DIs use the same key pair to respectively encrypt their models. However, if one DI is comprised by an adversary, it will lead to the privacy leakage in the pocket diagnosis.
The key agreement~\cite{xu2019verifynet} avoids this kind of privacy leakage by creating a shared key between any two DIs, but will incur high communication overheads as the number of DIs increases.
\textit{The first challenge is how to avoid the security threat and unnecessary overhead in pocket diagnosis under the single-key scenario.}

Thus, it is necessary to consider a more suitable model for secure computation to achieve the tradeoff between the availability and security of federated learning. Meanwhile, to provide accurate diagnosis services, an interpretable diagnosis model is required that can produce insights about the causes of decisions~\cite{gilpin2018explaining,nishio2018computer}.

It is also worth mentioning the corruption of training data in pocket diagnosis. Recently, the attack of injecting carefully crafted adversarial samples into the training dataset, so-called poisoning attack~\cite{Suciu2018MLF,biggio2018wild,al2019privacy}, has been proposed. The attacker aims to lead classification errors during the diagnosis time. Generally speaking, a wrong medical diagnosis may lead to a loss of trust of users and life-threatening consequences~\cite{mozaffari2015systematic}. The federated learning can protect the privacy of local models,
but it cannot guarantee that none of DIs are malicious~\cite{bagdasaryan2018backdoor,papernot2018sok}. Thus, the way may lead to DI-level poisoning attacks~\cite{bhagoji2018analyzing}.
Fig.~\ref{introdistribution} gives a high-level overview of the DI-level poisoning attack in the pocket diagnosis, the attacker injects poisoned data into certain DI (\ie ${DI}_2$) to poison the federated model. If the attack succeeds without timely defense, the poisoned federated model will return a wrong diagnosis result for the submitted request. To eliminate the above-mentioned hazards, a range of defense countermeasures have been designed to prevent poisoning attacks.
However, prior works~\cite{klivans2009Learning,steinhardt2017certified,jagielski2018manipulating} on poisoning attacks focus on implementing over local data stored in a single DI, but these schemes are futile for DI-level poisoning attacks as intermediate parameters will reveal local data privacy.
\textit{The second challenge is how to protect the training accuracy of pocket diagnosis under the adversarial setting of poisoning attacks.}

Due to the shortcomings (\ie limitations of local data access and the lack of privacy protection) of previous defense strategies, existing defense methods are not suitable for federated learning. To guarantee privacy and availability of the federated learning, it is urgent to propose a secure defense countermeasure against DI-level poisoning attacks in the federated learning.

\subsection{Our Contributions}
To address above challenges, in this paper we present a \textbf{S}ecure \textbf{F}ederated learning against DI-level \textbf{P}oisoning \textbf{A}ttack for the pocket diagnosis under multiple keys, which is termed as SFPA. Our main contributions are summarized as follows.

\begin{figure}
  \centering
  \includegraphics[width=3.5in]{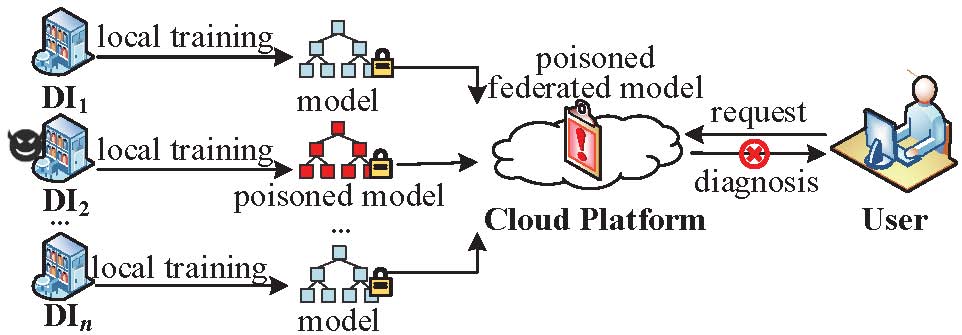}
  \caption{Overview of DI-level poisoning attack.}
  \label{introdistribution}
\end{figure}

\begin{itemize}
  \item \textit{Secure Federated Learning with Multiple Keys}: To prevent the security threat caused by the leakage of a single key, SFPA proposes a privacy-preserving federated learning framework based on multi-key computation. The framework designs a random forest-based federated diagnosis model over multiple DIs, which can be executed on a single cloud server to avoid the collusion of multi-server model.

  \item \textit{Secure Defense Countermeasure}: To defense DI-level poisoning attacks, SFPA proposes a secure and effective defense countermeasure, which ensures the privacy of DIs in the multi-key setting. The defense approach based on trimmed optimization can resist a range of DI-level poisoning attacks in the federated learning.
  \item \textit{Privacy-preserving Pocket Diagnosis}: To provide diagnosis services, SFPA designs th multi-key pocket diagnosis, which allows users to remotely send medical requests via mobile devices and instantly obtain final encrypted diagnosis results. 
\end{itemize}

The remainder of this paper is organized as follows. Section~\ref{sec:rekatedwork} and Section~\ref{sec:preliminaries} introduce related works and relevant technology. In Section~\ref{sec:system model}, we give a formal description of the system model, threat model and design goals. Then, we elaborate the construction of the {SFPA} framework in Section~\ref{secframework}, followed by the security analysis of SFPA system in Section~\ref{sec:privacyAnal}. Section~\ref{sec:perf} presents performance evaluation of SFPA sysetm. Finally, we conclude this paper and discuss its future work in Section~\ref{sec:conclusion}.

\section{Related Work}\label{sec:rekatedwork}
Previous studies on federated learning are based on the confidentiality of locally-trained models,
but do not take the availability of a federated model in the adversarial world into consideration.
In order to solve existing problems, we outline a brief review of previous studies on privacy-preserving federated learning and defense countermeasures against poisoning attacks, respectively.
\subsection{Secure Computation with Federated Learning}
The primitive of federated learning was first pointed out by Google~\cite{konevcny2016federated}, which is used to train the federated model over locally-trained models. Compared with the conventional training methods, the federated learning provides the privacy preservation by leveraging homomorphic encryption, and achieves the high efficiency with implementing local training over plaintexts. After that, many federated learning schemes were proposed to improve the efficiency and security~\cite{mcmahan2016communication,wang2019beyond}. McMahan \textit{et al.}~\cite{mcmahan2016communication} designed a secure aggregation algorithm based on Paillier cryptosystem on deep networks. Wang \textit{et al.}~\cite{wang2019beyond} proposed a federated learning framework constructed by using the convolutional neural network and a Paillier cryptosystem. However, these schemes only support the single key setting. Once the encrypted data are leaked or the communication channel is compromised, it is easy to obtain the secret key to decrypt the private data. To solve the dilemma, existing works on traditional machine learning~\cite{liu2016efficient,ma2019privacy,liu2018efficient} were proposed for privacy protections in the multi-key settings. Liu \textit{et al.}~\cite{liu2016efficient,liu2018efficient} designed multi-key secure computation protocols. Ma \textit{et al.}~\cite{ma2019privacy} presented a privacy-preserving Random Forest (RF) scheme for distributed learning, where RF with the natures of interpretability and transparency has been widely used in medical diagnosis. All the above schemes~\cite{liu2016efficient,ma2019privacy,liu2018efficient} can protect data privacy with multiple non-colluding cloud servers. Even if one cloud server is corrupted, it still can guarantee data privacy. Unfortunately, it is difficult to employ multiple non-colluding servers in the real-world due to commercial competitions and conflicts of interest.

There exists security problems in the single-key setting and constraints of a strong assumption in the multi-key setting.
Therefore, it is necessary to propose a multi-key federated learning scheme in the multiple DIs, which aims to eliminate the collusion assumption of multi-server model and guarantee multi-key secure computation.

\subsection{Defense Countermeasures against Poisoning Attack}
Practical poisoning attacks have been demonstrated in the machine learning. To defense such attacks, an increasing number of defense strategies have been presented. Klivans \textit{et al.}~\cite{klivans2009Learning} first introduced a learning framework, which removes the malicious data to avoid outliers by using a clustering algorithm. The similar defense strategies were suggested~\cite{papernot2018sok,steinhardt2017certified}, which prevent poisoned data in a training dataset by minimizing the loss function in the model.
If the error rate of a training sample is higher than a threshold, then the sample will be considered as an outlier. However, as the real distribution of benign training data is obviously unknown, it is extremely difficult to identify all poisoned data. Instead of removing outliers from the training data, a trimmed optimization is deployed to make machine learning robust. Jagielski \textit{et al.}~\cite{jagielski2018manipulating} adopted a trimmed loss function to select a subset of training data with the lowest residuals relative to the training model, removing data with large residuals to trim training data. The subset is close to the benign training data which does not contribute much to the poisoning attack. Wang \textit{et al.}~\cite{wang2019neural} proposed a novel adversarial deep neural network model by selecting a data subset to detect and mitigate poisoning attacks. Aforementioned strategies can effectively defense poisoning attacks in the regression learning and deep learning, but are restricted with data poisoning in a single DI. Besides, computational intermediate parameters of the defense process may lead to privacy leakage. Unlike data poisoning on traditional machine learning, the federated learning focuses on the issue of model poisoning~\cite{bhagoji2018analyzing}, resisting poisonous local-models from certain malicious DIs. Considering both security and availability of pocket diagnosis, it is urgent to propose a secure federated learning scheme for pocket diagnosis, which supports the multi-key setting and resists DI-level poisoning attacks.

\begin{table}[!ht]
 \centering
 \scriptsize
 \caption{Functionalities, securities and techniques in various schemes: A comparative summary}
 \tabcolsep 6pt
 \begin{tabular*}{3.5in}{l|cccccc}
  \toprule \hline
  Schemes &\cite{wang2019beyond} &\cite{xu2019verifynet} &\cite{ma2019privacy} &\cite{jagielski2018manipulating}&\cite{wang2019neural} &SFPA \\
  \hline
  Federated Learning &\Checkmark &\Checkmark &\XSolidBrush &\XSolidBrush&\XSolidBrush &\Checkmark \\\hline
  Secure Computation &\Checkmark &\Checkmark &\Checkmark &\XSolidBrush &\XSolidBrush&\Checkmark \\\hline
  Server&Single &Multiple &Dual &--- &Single &Single \\\hline
  Multi-key Setting &\XSolidBrush &\XSolidBrush &\Checkmark &\XSolidBrush&\XSolidBrush &\Checkmark \\\hline
  Defense Strategy  &\XSolidBrush &\XSolidBrush &\XSolidBrush &\Checkmark&\Checkmark &\Checkmark  \\\hline
  \bottomrule
  \end{tabular*}
  \label{campare}

\begin{tablenotes}
\item \textbf{Notes}. ``Single" and ``Dual" represent the number of cloud servers, respectively; ``---" means that is not involved secure computation.
\end{tablenotes}
\end{table}

TABLE~\ref{campare} summarizes the comparison between our SFPA and previous schemes~\cite{wang2019beyond,liu2016efficient,ma2019privacy,jagielski2018manipulating} in terms of several aspects (i.e., secure computation, multi-key setting, \etc). It reveals that SFPA provides both secure federated learning in the multi-key setting and defense against DI-level poisoning attacks. Meanwhile, our SFPA avoids the collusion of the multi-server model.

\section{Perliminary}\label{sec:preliminaries}

This section describes the definition of random forest, two-trapdoor public-key cryptosystem for multi-key decryption~\cite{liu2018efficient} and multi-key secure computation~\cite{liu2016efficient}.

\subsection{Random Forest}

Random Forest (RF) is an ensemble learner with outstanding interpretability, which consists of a series of Decision Trees (DTs). Assume that a RF includes $t$ DTs (\ie $rf=\{dt_1,dt_2,...,dt_t\}$), a label class is defined as $(c_0=-1,c_1=1)$\footnote{Without loss of generality, here we only consider binary classification.}. The final RF prediction result of a sample $x$ is computed as $\hat{y}=\frac{1}{t}\sum_{i=1}^{t}dt_i(x).$
When the value $\hat{y}\geq0$, the prediction is $c_1$; otherwise, the prediction is $c_0$. In a DT, the weight $w$ of a splitting node is the splitting value, the weight $w$ of a leaf node is the class label.

\subsection{Multi-key Decryption Scheme}\label{PublicKeyCrypto}
Here, we introduce the multi-key decryption scheme, which consists of five algorithms.
\begin{itemize}
  \item \texttt{KeyGen}: Given the security parameter $\varsigma$, the public key $pk=(N,g)$ and secret key $sk=\lambda$ are computed, where the secret key $sk$ can be randomly split into secret shares $sk^{(1)}$ and $sk^{(2)}$.
\item \texttt{Enc$_{pk}(m)$}: Given a plaintext $m$, the encrypted data $[\![m]\!]$ is computed with the public key $pk$.
\item \texttt{Dec$_{sk}([\![m]\!])$}: Given the ciphertext $[\![m]\!]$, the plaintext $m$ is decrypted with the secret key $sk$.
\item \texttt{SDec$_{sk^{(i)}}([\![m]\!])$}: Given the ciphertext $[\![m]\!]$, the decryption share $[\![m]\!]^{(i)}=[\![m]\!]^{{sk}^{(i)}}\mod N^2$ is computed with the corresponding secret share $sk_{i}$.
\item \texttt{WDec$(\{[\![m]\!]^{(1)},[\![m]\!]^{(2)}\})$}: Given the tuple of decryption shares $\{[\![m]\!]^{(1)},[\![m]\!]^{(2)}\}$, the plaintext $m$ is decrypted.
\end{itemize}

\subsection{Multi-Key Secure Computation}\label{sec_computation}

Here, we introduce the Secure Multiplication (\texttt{SMUL}), Secure Comparison (\texttt{SCOM}), Secure Addition (\texttt{SADD}), Secure Transformation (\texttt{STRA}) operations of multi-key secure computation. Given encrypted data under two different keys (\ie ${pk_a}, {pk_b}$), the computation result under the other public key $pk$ can be computed as follows:
\begin{itemize}
  \item \texttt{\texttt{SMUL$([\![m_a]\!]_{pk_a},[\![m_b]\!]_{pk_b})$}}: It outputs the multiplication result $[\![m_a \times m_b]\!]_{pk}$.
  \item \texttt{\texttt{SCOM$([\![m_a]\!]_{pk_a},[\![m_b]\!]_{pk_b})$}}: It outputs the comparison result $res$ to show which number is bigger. If $res=0$, $m_a \geq m_b$; Otherwise, $m_a < m_b$.
  \item \texttt{\texttt{SADD$([\![m_a]\!]_{pk_a},[\![m_b]\!]_{pk_b})$}}: It outputs the sum $[\![m_a + m_b]\!]_{pk}$.

  \item \texttt{\texttt{STRA$([\![m]\!]_{pk_a})$}}: It transforms $[\![m]\!]_{pk_a}$ to $[\![m]\!]_{pk}$.
\end{itemize}

\section{Problem Formulation}\label{sec:system model}

In this section, we define the system model, threat model and design goals of SFPA, respectively.
\subsection{System Model}
\begin{figure}
  \centering
  \includegraphics[width=3.5in]{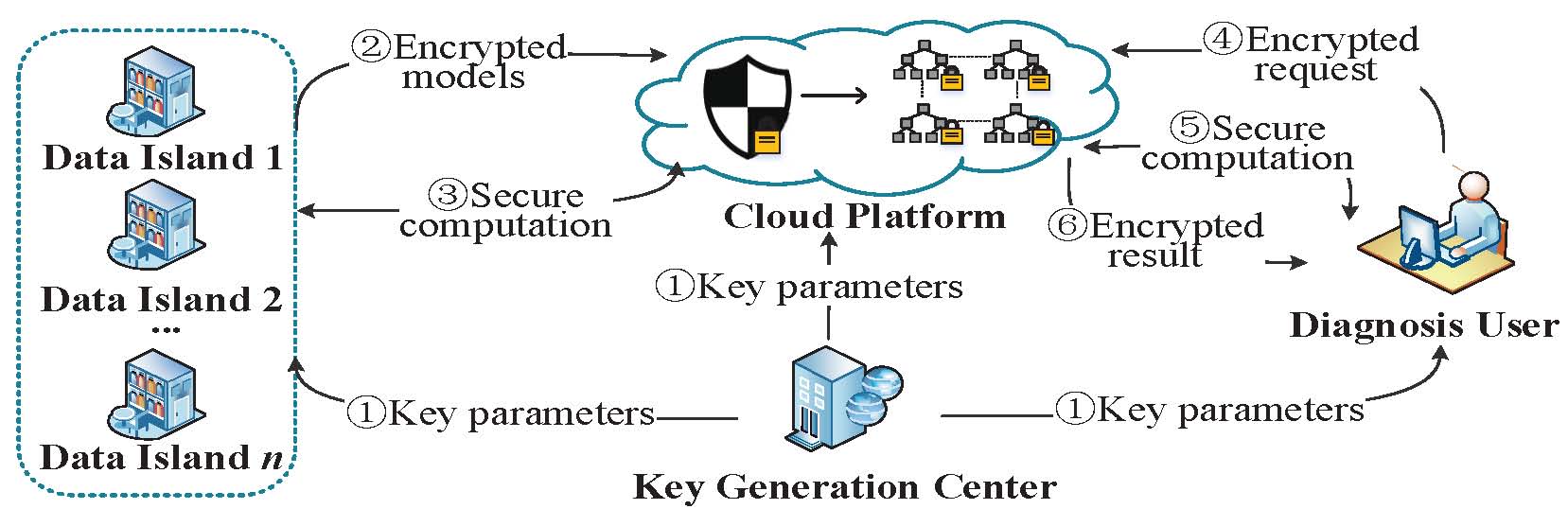}
  \caption{System model}
  \label{system_model}
\end{figure}
As demonstrated in Fig.~\ref{system_model}, our system model consists of four entities: Key Generation Center (KGC), Cloud Platform (CP), Data Island (DI) and Diagnosis User (DU). The concrete role of each entity is shown as follows:
\begin{itemize}
  \item \textit{Key generation center}. KGC is in charge of generation, distribution, and management of all keys in our system.
  \item \textit{Data island}. Each DI hosting distributed medical data first trains local models on individual data, and then encrypts local models by using the individual key before sending them to CP. Without loss of generality, the local training data of all DIs are considered as non-i.i.d. distribution. Besides, each DI conducts multi-key secure computation operations by cooperating with CP.
  \item \textit{Cloud platform}. CP owns unlimited storage space for encrypted locally-trained models, and implements the secure defense against DI-level poisoning attacks. Then, benign local models are securely aggregated to generate a federated model. Besides, CP can conduct multi-key secure computation operations.
  \item \textit{Diagnosis user}. DU first encrypts a diagnosis request with his/her public key before submitting it to CP, then decrypts the diagnosis result returned by CP. Besides, DUs are involved with multi-key secure computations that are executed with CP.
\end{itemize}

\subsection{Threat Model}~\label{threatModel}
In our threat model, we assume that KGC is trustable for key management. DU, DI, and CP are considered as \textit{honset-but curious} entities, which honestly follow pre-defined protocols but are interested in deducing private information from other entities. Besides, we assume that DI, CP, and DU are not in collusion. According to the available information and capabilities to an adversary, we consider two attack models with different attack abilities:
\begin{itemize}
  \item \textit{Privacy violation}. The adversary can observe encrypted locally-trained models, intermediate calculation parameters, encrypted diagnosis requests and returned results.

  \item \textit{Poisoning attack}. Suppose the adversary can access local training dataset $D_{tr_i}$ hosted on certain DIs and inject poisonous data to corrupt the locally-trained models, eventually achieve the DI-level poisoning attack on federated learning.
\end{itemize}

\subsection{Design Goals}\label{designGoal}

To achieve privacy-preserving federated learning and resist DI-level poisoning attacks under the aforementioned adversarial environment, our design must meet security and accuracy requirements
as follows.
\begin{itemize}
  \item \textit{Security}. SFPA should protect the confidentiality of locally-trained models and the privacy of intermediate computational parameters on federated learning. Besides, the proposed defense countermeasure against DI-level poisoning attacks should not reveal any private information. Finally, in the diagnosis phase, both DU's submitted-requests and corresponding diagnosis results should be only known to DU.
  \item \textit{Accuracy}. SFPA must guarantee the diagnosis accuracy of a federated model. Under the adversarial environment, SFPA framework can resist DI-level poisoning attacks, where a federated model is able to make correct predictions for pocket diagnosis.
\end{itemize}

\section{SFPA Framework with Multiple Keys}\label{secframework}

\begin{figure}[!ht]
  \centering
  \includegraphics[width=3.1in]{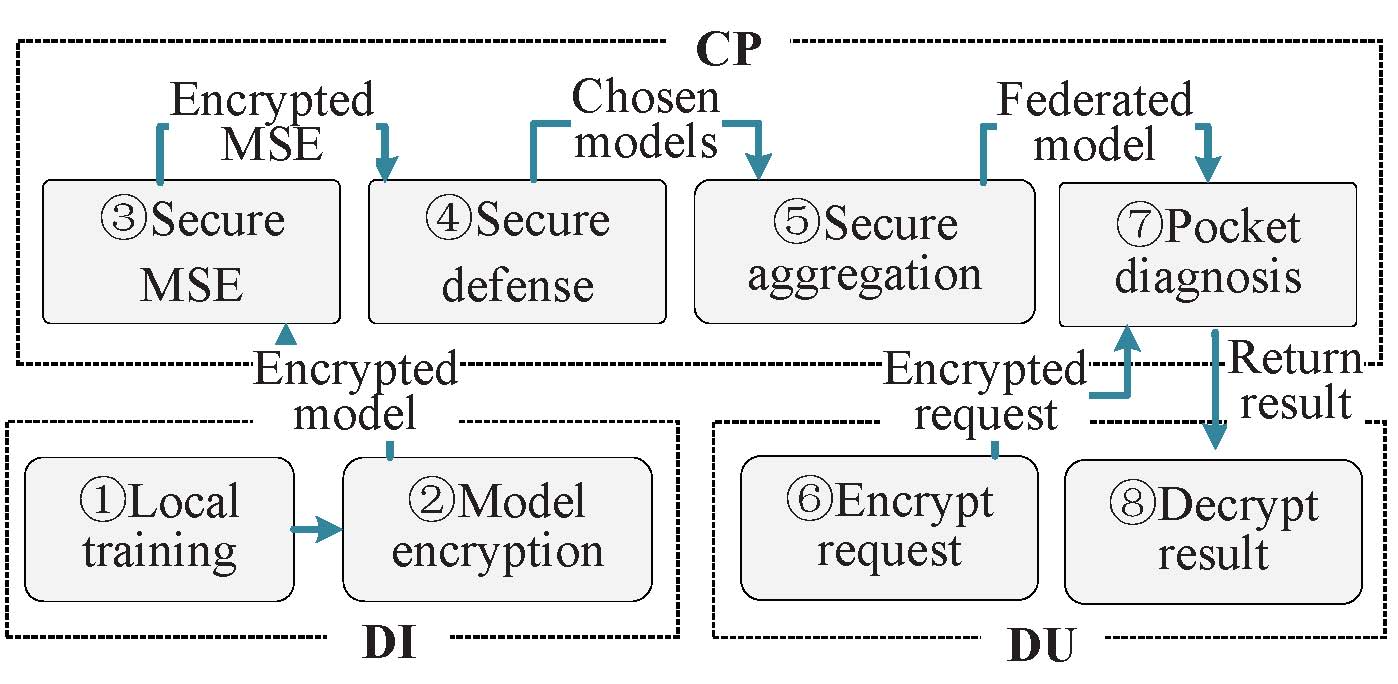}
  \caption{Overview of SFPA}
  \label{overview}
\end{figure}

Traditional federated learning schemes incur high computation and storage burden on constructing a diagnosis model, and even cannot resist poisoning attacks.
In this section, we first present a privacy-preserving RF-based federated learning framework in the ideal world to provide an efficient and secure training method, then design a secure defense countermeasure to resist DI-level poisoning attacks in the adversarial world. Finally, we securely implement pocket diagnosis.
The overview of SFPA is shown in Fig.~\ref{overview}, there are four main processes: initialization (Step $\textcircled{1}$$\textcircled{2}$), secure defense countermeasure (Step $\textcircled{3}$$\textcircled{4}$), privacy-preserving RF-based federated learning (Step $\textcircled{5}$), and privacy-preserving pocket diagnosis (Step $\textcircled{6}$-$\textcircled{8}$). The specified processes are defined as follows. Besides, some notations used in SFPA framework are shown in TABLE~\ref{notation}.

\subsection{Initialization}\label{sec_ini}

The initialization contains three phases: key distribution, model building, and model encryption. The concrete process is defined as follows.

\begin{table}[!ht]
 \centering
 \scriptsize
 \caption{Notation descriptions}
 \tabcolsep 3pt
 \begin{tabular*}{3.5in}{l|l}
 \toprule
 \hline
  Notations & Descriptions \\
 \hline
 ${\times}$ & Poisonous symbol\\ \hline
  ${*}$ & Benign symbol\\\hline
  $\textsf{FM}$ & Federated model \\\hline
  $\{sk^{(1)},sk^{(2)}\} $ & Secret shares for CP\\\hline
  $\{{sk}_{\text{DI}}^{(1)},sk_{\text{DI}}^{(2)} \} $ & Secret shares for a DI\\\hline
  $\{sk_u^{(1)},sk_u^{(2)} \}, sk_u$ & Secret shares and secret key for DU\\\hline
  ${[\![x]\!]_{pk}^{(1)}},{[\![x]\!]_{pk}^{(2)}}$ & Decryption shares for $[\![x]\!]_{pk}$\\\hline
  $F=\{f_1,f_2,...,f_{\kappa}\}$ & Feature set with the size of $\kappa$ \\\hline
  ${[\![{D}_{val}]\!]_{pk}}$ & Encrypted validation dataset  \\\hline
  $D_{tr_i}$ & Local training data belongs to $i$-th DI\\\hline
  $\textsf{Local}_{ben}^{\ast}$ & Choose locally-encrypted RF models to construct $\textsf{FM}$\\\hline
  $n^{*}$ & Number of benign data islands \\$n$ & Total number of data islands \\\hline
  $[\![rf]\!]_{pk_{\text{DI}_i}}$ & Locally-encrypted model belongs to $i$-th DI\\\hline
 $\textsf{Local}=\{[\![rf]\!]_{pk_{\text{DI}_i}}\}_{i=1}^n$ & Locally-encrypted models\\
\hline
 \bottomrule
 \end{tabular*}
 \label{notation}

\end{table}

\paratitle{KeyManagement}: Assume that SFPA framework contains $n$ DIs. KGC first generates keys $(pk,sk,sk^{(1)}, sk^{(2)})$, $(pk_{\text{DI}_i},sk_{\text{DI}_i},sk_{\text{DI}_i}^{(1)}, sk_{\text{DI}_i}^{(2)})$ and $(pk_u,sk_u,sk_u^{(1)}, sk_u^{(2)})$, where $sk^{(1)}, sk^{(2)}$ are corresponding secret shares of $sk$ and $sk_{\text{DI}_i}^{(1)}, sk_{\text{DI}_i}^{(2)}$ are corresponding secret shares of $sk_{\text{DI}_i}$ $(i=1,2,...,n)$, and $sk_u^{(1)}, sk_u^{(2)}$ are corresponding secret shares of $sk_u$, respectively. Besides, we define $pk=(N,g)$, $pk_{\text{DI}_i}=(N,g_i)$ and $pk_u=(N,g_u)$. Then, KC distributes keys to CP, DIs and DU, respectively. Concretely, details are shown in Fig.~\ref{figkeydistribute}.

\paratitle{ModelBuild}: As shown in Step $\textcircled{1}$, the $i$-th DI locally builds the RF model $rf_i$ over a local training dataset $D_{tr_i}$, where the trained model $rf_i$ consists of $t_i$ trees,  $rf_i=\{dt_1,dt_2,...,dt_{t_i}\}$.

\paratitle{ModelEnc}: As shown in Step $\textcircled{2}$, the locally-trained model $rf_i$ will be encrypted by the public key $pk_{\text{DI}_i}$ of the $i$-th DI, and each tree $dt\in rf_i $ can be encrypted from the root until the leaf node as
\begin{equation}
\begin{aligned}\label{modelEnc}
\nonumber
[\![dt]\!]_{pk_{\text{DI}_i}}=\{[\![w_{root}]\!]_{pk_{\text{DI}_i}},[\![w_1]\!]_{pk_{\text{DI}_i}},...,[\![w_{\Gamma}]\!]_{pk_{\text{DI}_i}}\},
\end{aligned}
\end{equation}
where $w$ (including split value and feature index) is a tree node's parameter, ${root}$ denotes the root node, and ${\Gamma}$ denotes the last node in a tree. Then, the $i$-th DI sends the encrypted model $[\![rf_i]\!]_{pk_{\text{DI}_i}}$ to CP.

\subsection{Privacy-preserving RF-based Federated Learning}\label{ppfederatedLearning}

In the ideal world, all DIs outsource encrypted locally-trained RF models under multiple keys, where DIs own mutually-different public keys ${pk_{\text{DI}_i}}$ to protect DIs' privacy (see Fig.~\ref{federatedlearning}(a)). We propose a privacy-preserving federated learning mechanism aiming at two targets: a) Constructing an accurate federated model over multiple RF models trained on DIs without accessing local data. b) Providing privacy protection for the whole process.

After the Step $\textcircled{2}$, locally-trained RF models are contained in $\textsf{Local}=\{[\![rf_i]\!]_{pk_{\text{DI}_i}}\}_{i=1}^{n}$ that are encrypted under different keys. Once receiving $\textsf{Local}$, CP will construct the federated model $\textsf{FM}$. Since $\textsf{Local}$ is encapsulated with mutually-different keys, it is necessary to aggregate $\textsf{Local}$ into $\textsf{FM}$ under the same key $pk$ for privacy preservations. The concrete process is shown as follows:

\paratitle{\textbf{SecureAggregation}}$(\rm\textsf{Local})\rightarrow \textsf{FM}$: As shown in Step $\textcircled{5}$, given encrypted models $\textsf{Local}$, CP transforms ciphertexts $\textsf{Local}$ under different keys $pk_{{\text{DI}_i}}(1\leq i\leq n)$ into ciphertexts $\textsf{FM}$ under the same public key $pk$ based on multi-key secure transformation (\texttt{STRA}). As the RF model from the $i$-th DI consists of $t_i$ trees, namely $[\![rf_i]\!]_{pk_{\text{DI}_i}}=\{[\![dt_1]\!]_{pk_{\text{DI}_i}},[\![dt_2]\!]_{pk_{\text{DI}_i}},...,[\![dt_{t_i}]\!]_{pk_{\text{DI}_i}}\}$, CP will use the \texttt{STRA} to transform each tree $[\![dt]\!]$ from the root node until leaf nodes as $[\![w]\!]_{pk}\leftarrow \texttt{STRA}([\![w]\!])$,\footnote{For simplification, we denote $[\![x]\!]_{pk_{\text{DI}_i}}$ as $[\![x]\!]$.} where the weight $[\![w]\!]$ of certain tree node will be transformed to $[\![w]\!]_{pk}$ under the sample public key $pk=(N,g)$. Since \texttt{STRA} is required to run over two parties, the $i$-th DI cooperates with CP to conduct secure aggregation on a tree node, which is demonstrated as follows:
\begin{itemize}
  \item  CP first chooses a random number $r\leftarrow\mathbb{Z}_N$ to blind $[\![w]\!]$ as $[\![w']\!]\leftarrow [\![w]\!]\cdot [\![r]\!]$, then computes the decryption share with $[\![w']\!]^{(1)}\leftarrow \texttt{SDec}_{sk_{\text{DI}}^{(1)}}([\![w']\!])$, finally sends both $[\![w']\!]$ and $[\![w']\!]^{(1)}$ to $i$-th DI.

  \item Then, the $i$-th DI first gets the decryption share via $ [\![w']\!]^{(2)}\leftarrow \texttt{SDec}_{sk_{\text{DI}}^{(2)}}([\![w']\!])$, then runs $w'\leftarrow \texttt{WDec}([\![w']\!]^{(1)},[\![w']\!]^{(2)})$ to obtain the decryption $w'$, finally returns $[\![w']\!]_{pk}\leftarrow \texttt{Enc}_{pk}(w')$ to CP.

  \item Finally, CP removes the blinded random number $r$ via $[\![w]\!]_{pk}\leftarrow [\![w']\!]_{pk} \cdot {[\![r]\!]_{pk}}^{N-1}$ to obtain the transformation result $[\![w]\!]_{pk}$.

\end{itemize}

\begin{figure}
  \centering
  \includegraphics[width=2.4in]{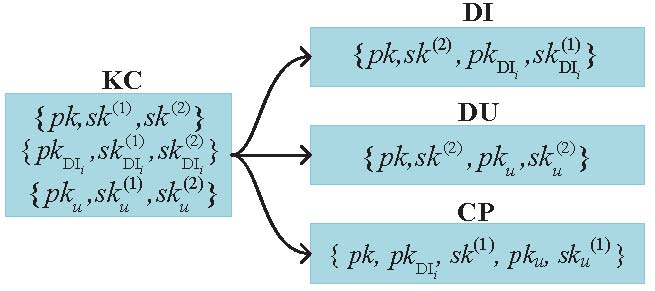}\\
  \caption{Key distribution in SFPA.}\label{figkeydistribute}
\end{figure}

In this way, each encrypted tree $[\![dt]\!]_{pk_{\text{DI}_i}}$ can be transformed into $[\![dt]\!]_{pk}$. Therefore, $[\![rf]\!]_{pk_{\text{DI}_i}}\in \textsf{Local}$ can be transformed into $[\![rf]\!]_{pk}$, and a federated model $\textsf{FM}$ is obtained via \textbf{SecureAggregation}$(\textsf{Local})\rightarrow \textsf{FM}$.

\subsection{Secure Defense Countermeasure}\label{sec_defense}

\begin{figure}[!ht]
	  \centering
      \subfigure[The view of ideal world.]{\includegraphics[width=1.73in]{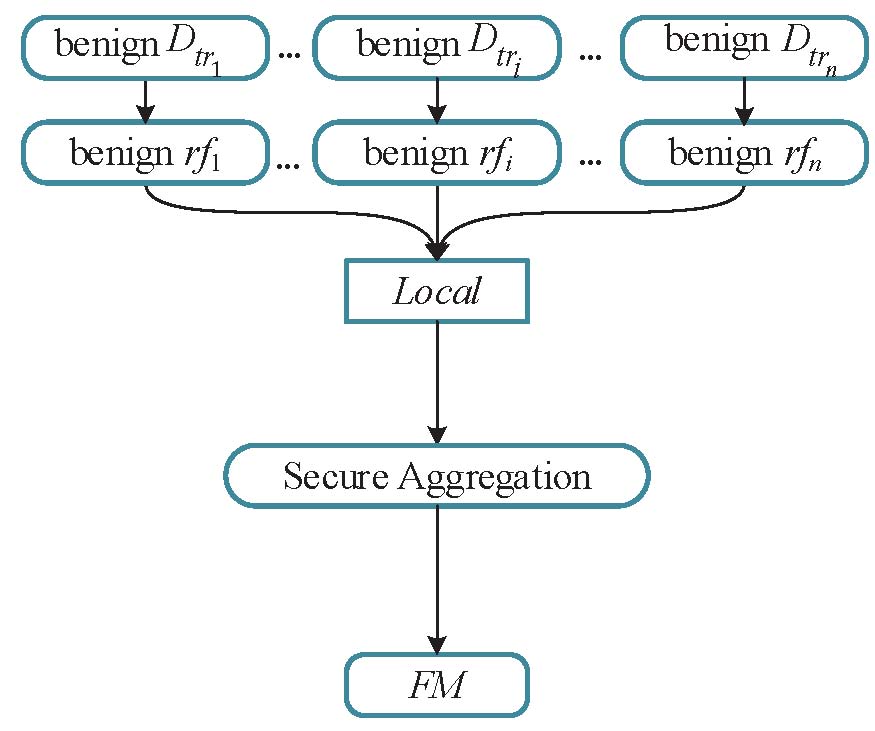}}
      \subfigure[The view of adversarial world.]{\includegraphics[width=1.73in]{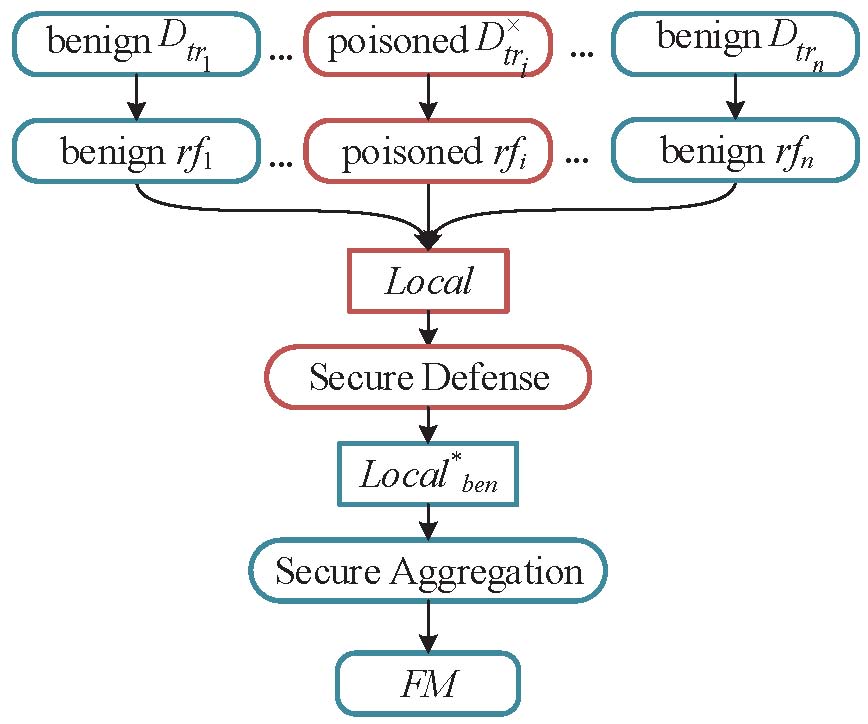}}
      \caption{Illustration of the comparison between ideal world and adversarial world in the privacy-preserving federated learning. There are $n$ DIs, ``$D_{tr_i}$" represents the training data belonging to the $i$-th DI, which is poisoned. } \label{federatedlearning}
\end{figure}

Different from the ideal world, the attacker in the adversarial world can inject poisonous data into DIs to launch a DI-level poisoning attack for the corruption of a federated model. Fig.~\ref{federatedlearning}(b) shows the view of adversarial world. Here, we propose a secure defense countermeasure to resist the DI-level poisoning attack without privacy leakage.

In the adversarial environment, we treat the attack as white-box, the attacker can know the training data ${D}_{tr}$, feature set ${F}$, learning algorithm ${H}$ and even training parameters $w$ belonging to $DI^{\times}$, the attacker's knowledge is described by
\begin{equation}
\begin{aligned}\label{RF_prediction}
knowledge=\{{D}_{tr},{F},{H},w\}.
\end{aligned}
\end{equation}

In addition, we define the DI-level poisoning attack as a bilevel optimization problem under the white-box knowledge assumption, as demonstrated in
\begin{equation}
\begin{aligned}\label{attackstrategy}
&argmax_{{D}^{\times}} \; \mathcal{L}^{*}({D}^{\ast}_{tr}, w^{\ast});\\
s.t. \; &w^{\ast} \in argmin_w \; \mathcal{L}({D}_{tr}\cup{D}^{\times}).
\end{aligned}
\end{equation}

To poison the locally-trained model of a $DI^{\times}$ to the most extent, the outer optimization injects carefully crafted poisoned data ${D}^{\times}$ into a $DI^{\times}$ to maximize a loss function $\mathcal{L}^{*}$ on an untainted training set ${D}_{tr}^{*}$. The inner optimization minimizes the loss function $\mathcal{L}$ on poisoned data ${D}_{tr}\cup{D}^{\times}$ for local training.

We assume that the total number of DIs in the federated learning is $n$ and the number of malicious DIs injected by the attacker is $n^{\times}=\alpha\cdot n$ ($\alpha<1$).
The attacker controls the fixed fraction $\alpha$ of malicious DIs (\ie $DI^{\times}$).
To withstand the DI-level poisoning attack, the principle goal is to construct a federated model $\textsf{FM}$ with chosen $\textsf{Local}_{ben}^{\ast}$ from encrypted local models $\textsf{Local}$, where $\textsf{Local}_{ben}^{\ast}$ contains all encrypted RF models from benign DIs in the ideal case. However, in the real world, it is difficult to pick all benign RF models from $\textsf{Local}$.

To solve the problem, we use the trimmed loss function to select $\textsf{Local}_{ben}^{\ast}$ with the lowest Mean Squared Error (MSE). The smaller MSE is, the more accurate prediction of RF outputs, where MSE is computed over encrypted validation dataset $[\![{D}_{val}]\!]_{pk}$ that are collected by CP from benign DIs. As MSE contains sensitive information, we compute MSE on an encrypted RF model $[\![rf_i]\!]_{pk_{\text{DI}_i}}$ over the validation dataset $[\![{D}_{val}]\!]_{pk}$ as $[\![{\textsf{MSE}_{rf_i}}]\!]_{pk}$.

To provide an efficient solution against the DI-level poisoning attack, we define the optimization problem in secure defense:
\begin{equation}
\begin{aligned}\label{optimal}
&min\sum_{i=1}^{n^{\ast}}[\![{\textsf{MSE}_{rf_i}}]\!]_{pk}\; \;s.t. \, [\![rf_i]\!]_{pk_{\text{DI}_i}} \in \textsf{Local}_{ben}^{\ast}, \\
\end{aligned}
\end{equation}
where
\begin{equation}
\begin{aligned}\label{MSE}
[\![\textsf{MSE}_{rf_i}]\!]_{pk}&=\sum_{k=1}^{|{D}_{val}|}[\![\textsf{MSE}_{rf_i,d_k}]\!]_{pk},\\
s.t. \,&[\![{d_k}]\!]_{pk} \in [\![{D}_{val}]\!]_{pk}.
\end{aligned}
\end{equation}

We design \textbf{SecureMSE} to securely compute the MSE value $[\![\textsf{MSE}_{rf,d_k}]\!]_{pk}$ over the encrypted RF model $[\![rf]\!]_{pk_{\text{DI}_i}}$ and an encrypted sample $[\![d_k]\!]_{pk}\in D_{val}$. The concrete process is shown as follows.

\paratitle{\textbf{SecureMSE}}$ ([\![rf]\!]_{pk_{\text{DI}_i}},[\![d_k]\!]_{pk})\rightarrow[\![\textsf{MSE}_{rf,d_k}]\!]_{pk} $: At the Step $\textcircled{3}$, given $[\![{d_k}]\!]_{pk} \in[\![{D}_{val}]\!]_{pk}$, the encrypted RF model $[\![rf]\!]_{pk_{\text{DI}_i}}\in \textsf{Local}$ returns the prediction $[\![\hat{y}]\!]_{pk_{\text{DI}_i}}\leftarrow \textbf{SecurePrediction}([\![rf]\!]_{pk_{\text{DI}_i}},[\![{d_k}]\!]_{pk})$. Note that the specific process of prediction is shown in Section~\ref{sec_diagnosis}. After obtaining the prediction $[\![\hat{y}]\!]_{pk_{\text{DI}_i}}$, $[\![{\textsf{MSE}_{rf,d_k}}]\!]_{pk}$ is computed as
\begin{equation}
\begin{aligned}\label{MSE_dk2}
[\![y-\hat{y}]\!]_{pk} &\leftarrow \texttt{SADD}([\![y]\!]_{pk},[\![\hat{y}]\!]_{pk_{\text{DI}_i}}^{N-1}),\\
[\![{\textsf{MSE}_{rf,{d_k}}}]\!]_{pk} &\leftarrow \texttt{SMUL}([\![y-\hat{y}]\!]_{pk},[\![y-\hat{y}]\!]_{pk}),
\end{aligned}
\end{equation}
where $[\![{y}]\!]_{pk}$ is the ground-truth label of $[\![{d_k}]\!]_{pk}$.
Clearly, the corresponding process of $[\![{\textsf{MSE}_{rf,d_k}}\footnote{Note that ${\textsf{MSE}_{rf,d_k}}=(y-\hat{y})^2$.}]\!]_{pk}$ is based on \texttt{SADD} and \texttt{SMUL}, it needs cooperation between CP and the $i$-th DI. Note that $[\![-y]\!]_{pk}=[\![y]\!]_{pk}^{N-1}$ is a homomorphic property~\cite{liu2018efficient}.

Computed with \textbf{SecureMSE}, we can obtain encrypted MSE values $[\![{\textsf{MSE}_{rf_i}}]\!]_{pk}$ $(i=1,2,...,n)$ over a validation dataset $[\![D_{val}]\!]_{pk}$. As shown in step $\textcircled{4}$, we need to remove poisonous models for secure defense, which can implement the solution to the optimal problem in Eq.~\ref{optimal} and provide privacy guarantees at the same time.
An outline of the specific process can be found in \textbf{Algorithm 1}. CP adopts MSE as a discriminator to select suitable local models and construct $\textsf{Local}_{ben}^{\ast}$. If ${\textsf{MSE}_{rf_i}}\geq \Theta$, where $[\![\Theta]\!]_{pk}$ is upper bounder set by the system, we consider the RF model $[\![rf_i]\!]_{pk_{\text{DI}_i}}$ is poisonous and remove it; If ${\textsf{MSE}_{rf_i}}< \Theta$, we consider that $[\![rf_i]\!]_{pk_{\text{DI}_i}}$ is benign. After selecting local models from $\textsf{Local}$, we construct the final federated model with chosen models $\textsf{Local}_{ben}^{\ast}$. Although the subset $\textsf{Local}_{ben}^{\ast}$ still contains the poisoned models, it is close to the benign models and cannot  corrupt the final federated model. Then, the chosen encrypted models $\textsf{Local}_{ben}^{\ast}$ under different public keys can be transformed to $\textsf{FM}$ under the same public key $pk$ via \textbf{SecureAggregation}$(\textsf{Local}_{ben}^{\ast})\rightarrow \textsf{FM}$, where $\textsf{FM}$ contains $n^*$ encrypted RF models. Remark that we recommend a default value of $\Theta=80$ when the size of the validation dataset is $|D_{val}|=100$, and the upper bound $[\![\Theta]\!]_{pk}$ can be adjusted according to the defense's needs and the size of $D_{val}$.

\begin{algorithm}
\scriptsize
\footnotesize
\small
\label{secure_defense}
\caption{Secure Defense}
\KwIn{The validation dataset $D_{val}$, the upper bound $[\![\Theta]\!]_{pk}$ of MSE value, and a set of encrypted locally-trained models $\textsf{Local}=\{[\![rf_i]\!]_{pk_i}\}_{i=1}^{n}$.}
\KwOut{The federated model $\textsf{FM}$.}
Initialize $\textsf{Local}_{ben}^{\ast}\leftarrow\{\}$;\\
\For{\rm each RF model $[\![rf_i]\!]_{pk_i} \in  \textsf{Local}$}{
\For{\rm each data sample $[\![d_k]\!]_{pk}\in D_{val}$}{
$[\![\textsf{MSE}_{rf_i,d_k}]\!]_{pk}\leftarrow \textbf{SecureMSE}([\![rf_i]\!]_{pk_i},[\![d_k]\!]_{pk})$;\\
}
$[\![\textsf{MSE}_{rf_i}]\!]_{pk}=\sum_{k=1}^{|{D}_{val}|}[\![\textsf{MSE}_{rf_i,d_k}]\!]_{pk}$;\\
Secure comparison \texttt{SCOM}$([\![\textsf{MSE}_{rf_i}]\!]_{pk},[\![\Theta]\!]_{pk})$;\\
\If{\rm $\textsf{MSE}_{rf_i}< \Theta$}{
$\textsf{Local}_{ben}^{\ast}\leftarrow \textsf{Local}_{ben}^{\ast} \cup [\![rf_i]\!]_{pk_i}$;\\
}
Construct the federated model $\textsf{FM}$ with $\textsf{Local}_{ben}^{\ast}$ under multiple keys via $\textsf{FM} \leftarrow \textbf{SecureAggregation}(\textsf{Local}_{ben}^{\ast})$;\\
}
\Return $\textsf{FM}$.
\end{algorithm}

\subsection{Privacy-preserving Pocket Diagnosis}\label{sec_diagnosis}

In the pocket diagnosis, an authorized DU can submit his/her symptom characteristics at any time and anywhere for pocket diagnosis service. In the Step $\textcircled{6}$, since symptom characteristics contain DU's privacy, each DU encrypts the submitted characteristics $\{f_1,f_2,...f_{\kappa}\}$ by using his/her public key $pk_u$ as $[\![req]\!]_{pk_u}=\{[\![f_1]\!]_{pk_u},[\![f_2]\!]_{pk_u},...,[\![f_{\kappa}]\!]_{pk_u}\}$.

Then, DU can require the diagnosis service from CP via the submitted-request $[\![req]\!]_{pk_u}$. On receiving $[\![req]\!]_{pk_u}$, CP will adopt the federated model $\textsf{FM}$ to implement privacy-preserving pocket diagnosis. Given $\textsf{FM}$ under the public key $pk$ and the encrypted request $[\![req]\!]_{pk_u}$ under DU's public key $pk_u$, the multi-key secure computation involved in \textbf{SecurePrediction} still needs to be executed between CP and DU as the authorized DU holds the corresponding key share $sk^{(2)}$. The details are shown as follows.

\paratitle{\textbf{SecurePrediction}$([\![rf_i]\!]_{pk_i},[\![req]\!]_{pk_u})$}: At the step $\textcircled{7}$, given an encrypted request $[\![req]\!]_{pk_u}$ submitted by an authorized DU, the federated model $\textsf{FM}=\{[\![rf_i]\!]_{pk}\}_{i=1}^{n*}$ can make a prediction on $[\![req]\!]_{pk_u}$. The prediction on each encrypted tree $[\![dt]\!]_{pk}\in \textsf{FM}$ is defined as $[\![dt]\!]_{pk}([\![req]\!]_{pk_u})$ that involves two keys (\ie $pk$ and $pk_u$). Therefore, the process of pocket diagnosis needs the cooperation between DU and CP for multi-key secure comparison (\textsf{SCOM}). The specific process demonstrates in~\textbf{Algorithm 2}, where each encrypted tree $[\![dt]\!]_{pk}\in \textsf{FM}$ makes the prediction by retrieving the corresponding label from the tree node until encountering a leaf node.

Taking Fig.~\ref{treeStruc} as an example, once receiving an encrypted request $[\![req]\!]_{pk_u}=\{[\![f_1]\!]_{pk_u},[\![f_2]\!]_{pk_u},...,[\![f_{\kappa}]\!]_{pk_u}\}$, the encrypted tree $[\![dt]\!]_{pk}$ will implement prediction as follows:
$[\![dt]\!]_{pk}$ first compares the tree root with $[\![req]\!]_{pk_u}$, as the root is not a leaf, the feature index $id=1$ and split value $[\![w_1]\!]_{pk}$ contained in the weight are obtained, then secure comparison $\texttt{SCOM}([\![w_1]\!]_{pk}, [\![f_1]\!]_{pk_u})$ is called to compare $[\![w_1]\!]_{pk}$ with the corresponding feature value $[\![f_1]\!]_{pk_u} \in [\![req]\!]_{pk_u}$. If $f_1< w_1$, then $[\![req]\!]_{pk_u}$ is compared with the left child, where the feature index $id=2$. Thus, $\texttt{SCOM}([\![w_2]\!]_{pk}, [\![f_2]\!]_{pk_u})$ is implemented to output the comparison result. If $f_2\geq w_2$, then $[\![req]\!]_{pk_u}$ is compared with the right child that is a leaf node. Thus, $c_1$ of the leaf is returned as the prediction result.

After obtaining all predictions from entire encrypted trees contained in $\textsf{FM}$, the final diagnosis result $[\![\hat{y}_{dia}]\!]_{pk}$ is provided as
\begin{equation}
\begin{aligned}\label{diagnosis}
[\![\hat{y}_{{dia}}]\!]_{pk}=\frac{1}{t}\sum_{i=1}^{n^{\ast}}\sum_{j=1}^{t_j}[\![dt_j]\!]_{pk}([\![x]\!]_{pk});\\
s.t. \; [\![dt_j]\!]_{pk}\in \textsf{FM},\; t=t_1+t_2+...+t_{n^{\ast}}.
\end{aligned}
\end{equation}

\begin{figure}
  \centering
  \includegraphics[width=2.5in]{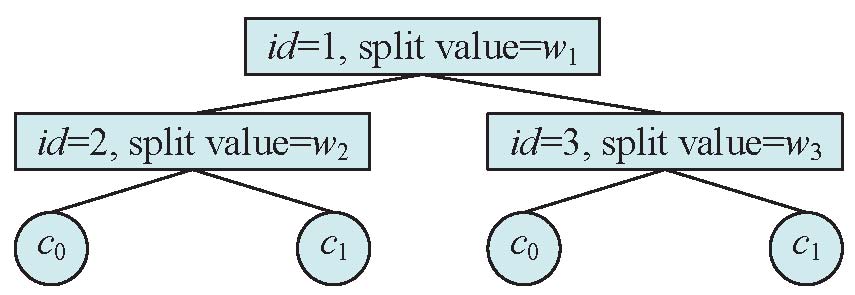}\\
  \caption{The structure of a tree.}\label{treeStruc}
\end{figure}

\begin{algorithm}
\scriptsize
\footnotesize
\small
\label{rfPrediction}
\caption{Secure Prediction on Encrypted Tree}
\KwIn{Encrypted instance $[\![req]\!]_{pk_u}=\{[\![f_1]\!]_{pk_u},...,[\![f_{\kappa}]\!]_{pk_u}\}$, and encrypted tree $[\![dt]\!]_{pk}$.}
\KwOut{Encrypted prediction result ${[\![\hat{y}]\!]}_{pk}$.}
Initialize the result $[\![\hat{y}]\!]_{pk}\leftarrow \textbf{Enc}_{pk}(0)$; \\
Initialize the feature index $id$;\\
$[\![node]\!]_{pk}\leftarrow$ the tree root of $[\![dt]\!]_{pk}$;\\
\While{true}{
\If{$node$ is a leaf}
{Obtain label $[\![c]\!]_{pk}$ of the leaf;\\
\Return ${[\![\hat{y}]\!]}_{pk}\leftarrow[\![c]\!]_{pk}$;\\
}
\Else{
Obtain the weight $(id,[\![w]\!]_{pk})$ from $[\![node]\!]_{pk}$;\\
Obtain the selected feature value $[\![f_{id}]\!]_{pk_u}\in[\![req]\!]_{pk_u}$;\\
Secure comparison \texttt{SCOM}$([\![f_{id}]\!]_{pk_u},[\![w]\!]_{pk})$;\\
\If{$f_{id} < w$}{
$[\![node]\!]_{pk} \leftarrow [\![node]\!]_{pk}.leftChild$;
}
\Else{$[\![node]\!]_{pk}\leftarrow [\![node]\!]_{pk}.rightChild$;}
}
}
\Return ${[\![\hat{y}]\!]}_{pk}$.
\end{algorithm}
To protect the privacy of DU's diagnosis result, $[\![\hat{y}_{dia}]\!]_{pk}$ can only be known to DU. In the step $\textcircled{8}$, the diagnosis result is securely returned according to following operations.
\begin{itemize}
  \item  CP first computes $ [\![\hat{y}_{{dia}}]\!]_{pk}^{(1)}\leftarrow \texttt{SDec}_{sk^{(1)}}([\![\hat{y}_{{dia}}]\!]_{pk})$, then sends both $[\![\hat{y}_{{dia}}]\!]_{pk}$ and $[\![\hat{y}_{{dia}}]\!]_{pk}^{(1)}$ to DU.
  \item On receiving the encrypted result, DU calculates $[\![\hat{y}_{{dia}}]\!]_{pk}^{(2)}\leftarrow \texttt{SDec}_{sk^{(2)}}([\![\hat{y}_{{dia}}]\!]_{pk})$ to obtain the decryption share, and then decrypts the final diagnosis result via
      $\hat{y}_{{dia}}\leftarrow \texttt{WDec}([\![\hat{y}_{{dia}}]\!]_{pk}^{(1)},[\![\hat{y}_{{dia}}]\!]_{pk}^{(2)})$ that is only know to DU.
\end{itemize}

\textbf{Remarks}. SFPA not only provides a more accurate diagnosis service by using the privacy-preserving RF-based federated learning in the ideal world, but also designs an efficient defense countermeasure against poisoning attacks in the adversarial world. Compared with previous schemes, our SFPA also supports the multi-key setting. Besides, SFPA framework can be extended with XGBoost, linear regression and deep learning techniques.
\section{Privacy Analysis}\label{sec:privacyAnal}
Here, we define the \textit{real vs. ideal} model to formalize the privacy analysis in SFPA. Specifically, assume that an adversary $\mathcal{A}$ interacts with a challenger in the real world to perform the algorithm $\prod$, then it interacts with a simulator $\mathcal{S}$ to complete the process in the ideal world. If the view of $\mathcal{A}$ in the real world is indistinguishable from the view of $\mathcal{S}$ in the ideal world, then we consider the algorithm $\prod$ is secure, as represented in $\{\text{IDEAL}_{\mathcal{S}}\}\stackrel{c}{\equiv} \{\text{REAL}_{\mathcal{A}}\}$,
where the symbol $\stackrel{c}{\equiv}$ means computational indistinguishability.
Besides, we analyze the security of our SFPA system as follows.
\begin{theorem}\rm
Our privacy-preserving RF-based federated learning is secure against semi-honest CP or DIs as long as multi-key secure computation achieves the security~\cite{liu2016efficient}, which can resist the distinguishment of intermediate computation results if there exists no collusion between CP and DIs.
\end{theorem}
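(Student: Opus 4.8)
The plan is to use a standard simulation-based (\emph{real vs.\ ideal}) argument, combined with a sequential composition theorem for the underlying multi-key secure building blocks (\texttt{STRA}, \texttt{SMUL}, \texttt{SADD}, \texttt{SCOM}, \texttt{SDec}, \texttt{WDec}). First I would fix the adversarial target: either a semi-honest CP or a semi-honest DI (but not both, since we assume no collusion). For each case I would construct a simulator $\mathcal{S}$ that, given only that party's legitimate input and output, produces a transcript computationally indistinguishable from the party's real view during \textbf{SecureAggregation} (and hence the whole privacy-preserving RF-based federated learning, which is just a sequence of per-node \texttt{STRA} invocations).

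The key steps, in order, are as follows. (1) Recall from the cited construction~\cite{liu2016efficient} that each sub-protocol (\texttt{STRA}, \texttt{SMUL}, \texttt{SADD}, \texttt{SCOM}) is itself proven secure in the semi-honest model, i.e.\ admits simulators $\mathcal{S}_{\texttt{STRA}}$, etc. (2) Observe that \textbf{SecureAggregation} invokes \texttt{STRA} on each tree node weight $[\![w]\!]$ independently, with all intermediate values either freshly re-randomized ciphertexts or the blinded value $w' = w + r$ for a uniform $r \leftarrow \mathbb{Z}_N$; therefore the CP's view consists only of (a) ciphertexts under semantically secure keys and (b) the decryption share $[\![w']\!]^{(1)}$ and re-encryption $[\![w']\!]_{pk}$, none of which leaks $w$ because $r$ perfectly blinds it modulo $N$. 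The simulator for CP samples random group elements of the right form; indistinguishability then reduces to the IND-CPA security of the two-trapdoor cryptosystem of Section~\ref{PublicKeyCrypto}. (3) For the DI's view, the only cleartext it handles is $w' = w + r$, which is uniform in $\mathbb{Z}_N$ from the DI's standpoint since $r$ is unknown to it; the simulator samples $w'$ uniformly and runs $\mathcal{S}_{\texttt{WDec}}$, $\mathcal{S}_{\texttt{Enc}}$ accordingly. (4) Invoke the sequential modular-composition theorem to glue the per-node simulators into one simulator for the full federated-learning protocol $\prod$, concluding $\{\text{IDEAL}_{\mathcal{S}}\} \stackrel{c}{\equiv} \{\text{REAL}_{\mathcal{A}}\}$.

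The main obstacle I expect is step (2)--(3): arguing rigorously that the blinding $w' = w + r \bmod N$ together with the revealed partial decryption share $[\![w']\!]^{(1)} = [\![w']\!]^{sk^{(1)}}$ leaks nothing about the true weight $w$ — in particular, one must verify that the \emph{pair} (share, ciphertext) is simulatable without the full secret key, which is exactly the property guaranteed by the two-trapdoor threshold structure but should be spelled out, and one must confirm that $w$ lies in a range small enough that wrap-around modulo $N$ does not re-introduce a statistical bias when $r$ is drawn from all of $\mathbb{Z}_N$. The remaining technical care is bookkeeping: ensuring the composition is genuinely sequential (no sub-protocol's input depends on another's randomness in a way that breaks the hybrid argument), which holds here because distinct tree nodes are processed with independent fresh randomness. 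Everything else — the hybrid over the polynomially many nodes, the reduction to IND-CPA — is routine.
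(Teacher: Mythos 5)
Your proposal follows essentially the same route as the paper's own proof: a simulation-based \emph{real vs.\ ideal} argument with separate CP-side and DI-side cases, where the CP's view is simulatable because it consists of ciphertexts (and decryption shares) under the semantically secure two-trapdoor cryptosystem, and the DI's view is simulatable because the only decrypted values are blinded by a fresh random $r\in\mathbb{Z}_N$. Your treatment is in fact more careful than the paper's — you make the per-node sequential composition, the hybrid over nodes, and the simulatability of the (ciphertext, partial-decryption-share) pair explicit, points the paper leaves implicit by deferring to the security of the multi-key computation protocols — but this is a refinement of the same argument, not a different one.
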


\begin{proof}
In the following, we separately analysis the security in our federated learning with the above \textit{real vs. ideal} model.

\textit{CP-side Security}: During the process of secure aggregation, the view of $\mathcal{A}_{\text{CP}}$ is defined as $\text{REAL}_{\mathcal{A}_\text{CP}}=(\textsf{Local}, \{sk^{(1)}_{\text{DI}_i}\}_{i=1}^{n},\textsf{Local}^{(1)},\textsf{FM})$, where $\textsf{Local}=\{pk,[\![rf_i]\!]_{pk_{\text{DI}_i}}\}_{i=1}^{n}$ are encrypted local models, $\{sk^{(1)}_{\text{DI}_i}\}_{i=1}^{n}$ are secret shares of corresponding $\{sk_{\text{DI}_i}\}_{i=1}^{n}$, and their decryption shares $\textsf{Local}^{(1)}=\{[\![rf_i]\!]_{pk_{\text{DI}_i}}^{(1)}\}_{i=1}^{n}$.
Given the tuple $(\textsf{Local}, \{sk_{\text{DI}_i}^{(1)}\}_{i=1}^{n})$, we construct the simulator $\mathcal{S}_{\text{CP}}$ as follows:
 \begin{itemize}
   \item CP blinds random numbers to obtain $\textsf{Local}'$, and use \texttt{SDec} to get decryption shares $\small{{\overline{\textsf{Local}}'^{(1)}}}=\{[\![\overline{rf_i'}]\!]_{pk_{\text{DI}_i}}^{(1)}\}_{i=1}^n$.
   \item The $i$-th DI operates $\texttt{SDec}_{sk_{\text{DI}_i}^{(2)}}(\cdot)$ and \texttt{WDec} to get the corresponding plaintexts $\pi$ of $\textsf{Local}'$ that contain blinded numbers, then returns $\small\overline{\textsf{FM}}\leftarrow \texttt{Enc}_{pk}(\cdot)$ to $\mathcal{S}_{\text{CP}}$.
 \end{itemize}

Obliviously, the distributions of the real-world $\text{REAL}_{\mathcal{A}_\text{CP}}=(\textsf{Local}, \{sk^{(1)}_{\text{DI}_i}\}_{i=1}^{n},\textsf{Local}^{(1)},\textsf{FM})$ and the distributions of the ideal-world $\text{IDEAL}_{\mathcal{S}_\text{CP}}=(\textsf{Local}, \{sk^{(1)}_{\text{DI}_i}\}_{i=1}^{n}$,${\overline{\textsf{Local}}'^{(1)}},\overline{\textsf{FM}})$ are indistinguishable due to the semantic security of the proved public-key cryptosystem, as demonstrated in $\{\text{IDEAL}_{\mathcal{S}_{\text{CP}}}\}\stackrel{c}{\equiv} \{\text{REAL}_{\mathcal{A}_{\text{CP}}}\}$.

\textit{DI-side Security}: Similarly, the view of $\mathcal{A}_{\text{DI}}$ is defined as $\text{REAL}_{\mathcal{A}_{\text{DI}}}=({\textsf{Local}'^{(1)}},{\textsf{Local}'},{sk_{\text{DI}_i}^{(2)}},\pi)$, which includes the intermediate numbers $\pi$ blinded by random numbers. Thus, given $({\textsf{Local}'^{(1)}},{\textsf{Local}'},{sk_{\text{DI}_i}^{(2)}})$ to construct ${\mathcal{S}_{\text{DI}}}$ with the above operations. The view of ${\mathcal{S}_{\text{DI}}}$ is $\text{IDEAL}_{\mathcal{S}_{\text{DI}}}=({\textsf{Local}'^{(1)}},{\textsf{Local}'},{sk_{\text{DI}_i}^{(2)}},\overline{\pi})$, it is unable to deduce any privacy from the intermediate operations due to blinded random numbers.
Therefore, we can conclude that $\{\text{IDEAL}_{\mathcal{S}_{\text{DI}}}\}\stackrel{c}{\equiv} \{\text{REAL}_{\mathcal{A}_{\text{DI}}}\}.$

Here, the secure aggregation is proved to be secure against semi-honest adversaries $\mathcal{A}_{\text{CP}}$ and $\mathcal{A}_{\text{DI}}$ those can corrupt CP and DIs in the real world, respectively.
\end{proof}

\begin{theorem}\rm
Our defense countermeasure which deals with DI-level poisoning attacks is secure against semi-honest DI or CP on condition that the multi-key secure computation is secure and there exist no collusion attacks between CP and DIs.
\end{theorem}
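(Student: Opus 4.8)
The plan is to carry the \textit{real vs. ideal} methodology of Theorem~1 over to \textbf{Algorithm~1}, and to proceed by a hybrid argument that peels off one sub-protocol invocation at a time. The defense is a sequential composition of \textbf{SecureMSE} (which internally runs \textbf{SecurePrediction}, hence \texttt{SADD}, \texttt{SMUL} and \texttt{SCOM}), the threshold test \texttt{SCOM}$([\![\textsf{MSE}_{rf_i}]\!]_{pk},[\![\Theta]\!]_{pk})$ for each model, and \textbf{SecureAggregation}. First I would pin down the ideal functionality: on input $(\textsf{Local},[\![D_{val}]\!]_{pk},[\![\Theta]\!]_{pk})$ it hands CP only the selection bits $\{b_i\}_{i=1}^{n}$ with $b_i=1\iff\textsf{MSE}_{rf_i}<\Theta$, together with the aggregated $\textsf{FM}$; everything else stays encrypted. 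This makes explicit that the sole plaintext leakage is the per-model keep/discard decision, which is exactly what \texttt{SCOM} is designed to reveal, and lets the simulator be handed $\{b_i\}$ so that the branch it takes in the \texttt{if}-test matches the real execution.

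For \textit{CP-side security} I would set $\text{REAL}_{\mathcal{A}_{\text{CP}}}=(\textsf{Local},[\![D_{val}]\!]_{pk},[\![\Theta]\!]_{pk},\{sk^{(1)}_{\text{DI}_i}\}_{i=1}^{n},\Pi,\{[\![\textsf{MSE}_{rf_i}]\!]_{pk}\}_{i=1}^{n},\{b_i\}_{i=1}^{n},\textsf{FM})$, where $\Pi$ collects the blinded ciphertexts and decryption shares exchanged inside every \textbf{SecureMSE}/\textbf{SecurePrediction} call. The simulator $\mathcal{S}_{\text{CP}}$ is built through hybrids: replace each \texttt{SADD}, \texttt{SMUL}, \texttt{SCOM} invocation by the simulator guaranteed by the security of multi-key secure computation~\cite{liu2016efficient}; replace every partial decryption by the encryption of a fresh uniform plaintext, which is indistinguishable because each decrypted value is masked by an independent $r\leftarrow\mathbb{Z}_N$ chosen by CP exactly as in the proof of Theorem~1; feed $\mathcal{S}_{\text{CP}}$ the genuine $\{b_i\}$ from the ideal functionality; and invoke the \textbf{SecureAggregation} simulator of Theorem~1 to produce $\overline{\textsf{FM}}$. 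Summing the hybrid gaps over the $n\cdot|D_{val}|$ sub-protocol calls and using the semantic security of the two-trapdoor cryptosystem gives $\{\text{IDEAL}_{\mathcal{S}_{\text{CP}}}\}\stackrel{c}{\equiv}\{\text{REAL}_{\mathcal{A}_{\text{CP}}}\}$.

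For \textit{DI-side security} the $i$-th DI only acts as the second party of \textbf{SecurePrediction}/\textbf{SecureMSE} and of \textbf{SecureAggregation}, so $\text{REAL}_{\mathcal{A}_{\text{DI}}}$ consists of $sk^{(2)}_{\text{DI}_i}$, the incoming blinded ciphertexts with their decryption shares, and the blinded intermediate plaintexts $\pi$ it reconstructs via \texttt{WDec}. Since each such plaintext carries a uniform additive mask supplied by CP, the simulator $\mathcal{S}_{\text{DI}}$ replaces $\pi$ by $\overline{\pi}$ sampled uniformly and re-encrypts under $pk$, mirroring the DI-side argument of Theorem~1; hence $\{\text{IDEAL}_{\mathcal{S}_{\text{DI}}}\}\stackrel{c}{\equiv}\{\text{REAL}_{\mathcal{A}_{\text{DI}}}\}$. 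Because CP and the DIs are assumed non-colluding, combining the two cases establishes the theorem.

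The main obstacle is the composition-and-leakage step rather than any single primitive. One must argue carefully that the ordered sequence of \texttt{SCOM} outputs $\{b_i\}$ is precisely the defense functionality's output and leaks nothing further — in particular that the encrypted quantities $[\![\textsf{MSE}_{rf_i}]\!]_{pk}$ themselves remain hidden, and that running $n\cdot|D_{val}|$ copies of the prediction and arithmetic sub-protocols accumulates only a negligible distinguishing advantage. This is exactly what the hybrid argument delivers, provided each \texttt{SADD}, \texttt{SMUL}, \texttt{SCOM}, \texttt{STRA} invocation is simulation-secure and every decrypted intermediate is freshly blinded so as to be uniformly distributed.
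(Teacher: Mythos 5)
Your proposal is correct and follows the same simulation-based \textit{real vs.\ ideal} framework the paper uses, but it is a genuinely more structured version of the argument. The paper's own proof of this theorem is essentially a sketch: it reuses the CP-side view $(\textsf{Local},\{sk^{(1)}_{\text{DI}_i}\}_{i=1}^{n},\textsf{Local}^{(1)},\textsf{FM})$ from Theorem~1, observes that every interaction in \textbf{SecureMSE} and the defense algorithm passes through the blinded multi-key primitives (\texttt{SADD}, \texttt{STRA}, \texttt{SCOM}, \texttt{SMUL}) so that neither CP nor a DI ever sees an unmasked plaintext, asserts indistinguishability from the semantic security of the two-trapdoor cryptosystem, and defers the remaining details to the cited multi-key computation work; the DI-side case is dispatched in one sentence. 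You do two things the paper does not: (i) you pin down an explicit ideal functionality whose only plaintext output to CP is the set of keep/discard bits $\{b_i\}$ from the threshold \texttt{SCOM} tests plus the aggregated $\textsf{FM}$, which makes the actual leakage of Algorithm~1 precise (the paper never acknowledges that CP necessarily learns which models are filtered out, even though this is inherent to how $\textsf{Local}_{ben}^{\ast}$ is built); and (ii) you give a hybrid argument over the $n\cdot|D_{val}|$ sequentially composed sub-protocol invocations, whereas the paper implicitly assumes composition is free. Both additions are sound and strengthen the claim; the cost is only that your simulator must be handed $\{b_i\}$, which you correctly justify as the functionality's intended output rather than extra leakage. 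The one point worth stating explicitly if you write this up is that each blinding value $r$ acts as an additive one-time mask modulo $N$ on the decrypted intermediate (since $[\![w]\!]\cdot[\![r]\!]$ encrypts $w+r$), which is exactly why the replaced-by-uniform hybrids go through.
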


\begin{proof}
Similar to \textbf{Theorem 1}, the specific analysis is demonstrated as follows.

\textit{CP-side Security}: During the process of secure defense, the view of $\mathcal{A}_{\text{CP}}$ is defined as $\text{REAL}_{\mathcal{A}_\text{CP}}=(\textsf{Local}, \{sk^{(1)}_{\text{DI}_i}\}_{i=1}^{n},\textsf{Local}^{(1)},\textsf{FM})$. To implement the \textbf{SecureMSE} and secure defense algorithm, subsequent interactions in secure computation (\ie \texttt{SADD}, \texttt{STRA}, \texttt{SCOM}, \texttt{SMUL}) between CP and DIs are involved with multiple keys. Although CP holds the partial secret shares $\{sk^{(1)}_{\text{DI}_i}\}_{i=1}^{n}$, $sk^{(1)}$ and each DI holds $\{sk^{(2)}_{\text{DI}_i}\}_{i=1}^{n}$, $sk^{(2)}$, none of them can obtain the plaintexts as the blinding technique is employed in the secure computation. The view of ${\mathcal{S}_{\text{CP}}}$ is $\text{REAL}_{\mathcal{A}_\text{CP}}=(\textsf{Local}, \{sk^{(1)}_{\text{DI}_i}\}_{i=1}^{n},{\overline{\textsf{Local}}'^{(1)}},\overline{\textsf{FM}})$. Obviously, it is computational indistinguishability, which is represented as $\{\text{IDEAL}_{\mathcal{S}_{\text{CP}}}\}\stackrel{c}{\equiv} \{\text{REAL}_{\mathcal{A}_{\text{CP}}}\}.$

\textit{DI-side Security}: Similarly, the distribution of views between ${\mathcal{A}_{\text{DI}}}$ and ${\mathcal{S}_{\text{DI}}}$ are also identical, which are shown as $\{\text{IDEAL}_{\mathcal{S}_{\text{DI}}}\}\stackrel{c}{\equiv} \{\text{REAL}_{\mathcal{A}_{\text{DI}}}\}.$

For more proof details, please refer to the reference~\cite{liu2016efficient}. Both the adversary $\mathcal{A}_{\text{CP}}$ corrupting CP and the adversary $\mathcal{A}_{\text{DI}}$ corrupting DIs are unable to distinguish the ideal world from the real world.
\end{proof}

\begin{theorem}\rm
Our pocket diagnosis is secure against semi-honest CP or DU on condition that the multi-key secure computation can realize the distinguishment of intermediate computation results and there exist no collusion attacks between CP and DIs.
\end{theorem}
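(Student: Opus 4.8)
The plan is to mirror the structure of the proofs of Theorems~1 and~2, instantiating the \textit{real vs. ideal} paradigm for the pocket diagnosis phase. The protocol $\prod$ here is \textbf{SecurePrediction} together with the final decryption step (Step $\textcircled{8}$), both of which invoke only the multi-key primitives \texttt{SCOM}, \texttt{SDec}, \texttt{WDec} (and the homomorphic operations used to form $[\![\hat{y}_{dia}]\!]_{pk}$). Since there is no collusion between CP and DU, it suffices to analyse two corruption cases separately: a semi-honest CP and a semi-honest DU. The key observation to exploit is that every intermediate quantity CP sees is either a ciphertext under a key whose full secret key CP does not hold, or a partial decryption share produced via \texttt{SDec} on a value blinded by a fresh random mask; and the only cleartext ever recovered ($\hat{y}_{dia}$) is recovered by DU alone.

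First I would handle \emph{CP-side security}. The view of $\mathcal{A}_{\text{CP}}$ in the real world is
\begin{equation}
\begin{aligned}\nonumber
\text{REAL}_{\mathcal{A}_{\text{CP}}}=\bigl(\textsf{FM},[\![req]\!]_{pk_u},sk^{(1)},\{\text{SCOM transcripts}\},[\![\hat{y}_{dia}]\!]_{pk},[\![\hat{y}_{dia}]\!]_{pk}^{(1)}\bigr).
\end{aligned}
\end{equation}
I would build a simulator $\mathcal{S}_{\text{CP}}$ that, given $\textsf{FM}$, $[\![req]\!]_{pk_u}$, and $sk^{(1)}$, runs CP's prescribed steps honestly: it invokes the \texttt{SCOM} simulator (which exists by the assumed security of multi-key secure computation~\cite{liu2016efficient}) to produce simulated comparison transcripts $\overline{\{\text{SCOM}\}}$, forms $\overline{[\![\hat{y}_{dia}]\!]_{pk}}\leftarrow\texttt{Enc}_{pk}(0)$-style dummy aggregates by encrypting arbitrary plaintexts, and computes the partial decryption share $\overline{[\![\hat{y}_{dia}]\!]_{pk}^{(1)}}$ from it with $sk^{(1)}$. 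Indistinguishability of $\text{IDEAL}_{\mathcal{S}_{\text{CP}}}$ and $\text{REAL}_{\mathcal{A}_{\text{CP}}}$ then follows from the semantic security of the two-trapdoor cryptosystem (the ciphertexts) composed with the simulation security of \texttt{SCOM}, noting that $sk^{(1)}$ alone yields no information about any plaintext since \texttt{WDec} needs both shares.

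Next I would treat \emph{DU-side security}. DU's real-world view is $\text{REAL}_{\mathcal{A}_{\text{DU}}}=\bigl(sk_u,sk^{(2)},\{\text{SCOM transcripts}\},[\![\hat{y}_{dia}]\!]_{pk},[\![\hat{y}_{dia}]\!]_{pk}^{(1)},\hat{y}_{dia}\bigr)$. The subtlety here, and what I expect to be the main obstacle, is that DU legitimately learns the output $\hat{y}_{dia}$, so the simulator must be given $\hat{y}_{dia}$ as the prescribed functionality output and then produce a consistent transcript: $\mathcal{S}_{\text{DU}}$ picks $[\![\hat{y}_{dia}]\!]_{pk}^{(1)}$ and $[\![\hat{y}_{dia}]\!]_{pk}$ such that $\texttt{WDec}$ of the induced shares yields exactly $\hat{y}_{dia}$, while every \texttt{SCOM} transcript seen by DU during \textbf{SecurePrediction} is replaced by the output of the \texttt{SCOM} simulator. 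One must argue that the branch decisions inside each tree (which \texttt{SCOM} reveals bit-by-bit to the parties running it) do not leak more than $\hat{y}_{dia}$ about $\textsf{FM}$ or $req$ beyond what the functionality allows; this is inherited from the security statement of \texttt{SCOM} in~\cite{liu2016efficient}, which guarantees the comparison transcript is simulatable given only the comparison bit, and the comparison bits are themselves masked/blinded as in the \texttt{SCOM} protocol so that they reveal nothing standalone. With that, $\{\text{IDEAL}_{\mathcal{S}_{\text{DU}}}\}\stackrel{c}{\equiv}\{\text{REAL}_{\mathcal{A}_{\text{DU}}}\}$.

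Finally I would close by composing the two cases: since the only interactive sub-protocols are the multi-key operations, each of which is simulation-secure under the stated assumption, and since DU and CP do not collude, sequential composition gives that \textbf{SecurePrediction} together with Step $\textcircled{8}$ is secure against a semi-honest CP and against a semi-honest DU, which is exactly the claim. I would also remark, as in Theorem~2, that full low-level details of the \texttt{SCOM}/\texttt{SDec}/\texttt{WDec} simulators are deferred to~\cite{liu2016efficient,liu2018efficient}. The genuinely delicate point remains ensuring the simulator for the corrupted DU is handed the true output and can back-patch the final ciphertext and decryption share consistently — everything else is a routine blinding-plus-semantic-security argument.
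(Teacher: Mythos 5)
Your proposal follows essentially the same route as the paper's proof: a \textit{real vs.\ ideal} simulation split into the two corruption cases (semi-honest CP and semi-honest DU), with the simulators built from the multi-key primitives whose security is imported from~\cite{liu2016efficient}, indistinguishability reduced to the semantic security of the two-trapdoor cryptosystem and the blinding of intermediate values, under the no-collusion assumption. If anything, your DU-side treatment (handing the simulator the true output $\hat{y}_{dia}$ and back-patching the final ciphertext and decryption share, plus flagging the per-node \texttt{SCOM} leakage) is more careful than the paper's own argument, which for DU merely notes that the result is encrypted under $pk$ and only DU holds $sk^{(2)}$.
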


\begin{proof}\rm
The specific analysis is shown as follows.

\textit{CP-side Security}: During the process of pocket diagnosis, the view of ${\mathcal{A}_\text{CP}}$ is defines as $\text{REAL}_{\mathcal{A}_\text{CP}}=([\![req]\!]_{pk_u}, sk^{(1)}_{u},\textsf{FM})$, where $[\![req]\!]_{pk_u}$ represents an encrypted request sent by DU for diagnosis service, $sk^{(1)}_{u}$ denotes DU's corresponding secret share. Given $([\![req]\!]_{pk_u}, sk^{(1)}_{u},\textsf{FM})$, we adopt multi-key secure computation to construct the simulator ${\mathcal{S}_\text{CP}}$, the process interacts with DU and provides the proved privacy protection~\cite{liu2016efficient}. Therefore, it is impossible to infer the private information, the distribution of views between $\text{REAL}_{\mathcal{A}_\text{CP}}$ and $\text{IDEAL}_{\mathcal{S}_\text{CP}}$ is indistinguishable, as represented in $\{\text{IDEAL}_{\mathcal{S}_{\text{CP}}}\}\stackrel{c}{\equiv} \{\text{REAL}_{\mathcal{A}_{\text{CP}}}\}.$

\textit{DU-side Security}:
In addition, the returned diagnosis result $[\![\hat{y}_{dia}]\!]_{pk}^{(2)}$ is encrypted under public key $pk$, it is worth noticing that DU holds the secret share $sk^{(2)}$ to decrypt the final diagnosis results. Even if the adversary steals the federated model, it still cannot use it without the corresponding secret key $sk^{(2)}$. Thus, we conclude that $\{\text{IDEAL}_{\mathcal{S}_{\text{DU}}}\}\stackrel{c}{\equiv} \{\text{REAL}_{\mathcal{A}_{\text{DU}}}\}.$

Therefore, both the adversary $\mathcal{A}_{\text{CP}}$ corrupting CP and the adversary $\mathcal{A}_{\text{DU}}$ corrupting DU in the real world are unable to distinguish.
\end{proof}

To guarantee the security of the SFPA system, the malicious adversaries cannot distinguish encrypted information from the other encrypted data without the corresponding secret keys. Therefore, SFPA system can resist privacy leakage with the above theorems.

\section{Performance Analysis}\label{sec:perf}

In this section, we first clarify two real-world datasets and the experiment setup, then evaluate the performance of SFPA by comparing it with other typical privacy-preserving RF-based machine learning schemes~\cite{ma2019privacy}. Finally, we validate the effectiveness of the secure defense against DI-level poisoning attacks.

\subsection{Dataset and Experiment Setup}

We perform our evaluation on two public datasets:

\begin{enumerate}
    \item[(a)] \textbf{Heart disease dataset}\footnote{\url{http://archive.ics.uci.edu/ml/datasets/Heart+Disease}} includes 303 instances, 14 features as well as two labels, where a patient without heart disease is labeled as ``-1", a patient suffers from heart disease ``1".
  \item[(b)] \textbf{Thyroid disease dataset}\footnote{\url{http://www.kaggle.com/kumar012/hypothyroid}} contains 3,163 instances, 18 features and two labels, where a patient without thyroid disease is annotated as ``-1", a patient with thyroid disease is annotated as ``1".
\end{enumerate}

\begin{figure}[!ht]
	  \centering
      \subfigure[Feature distribution of 3 sub-datasets belonging to ${D}_{tr_a}$.]{\includegraphics[width=1.65in]{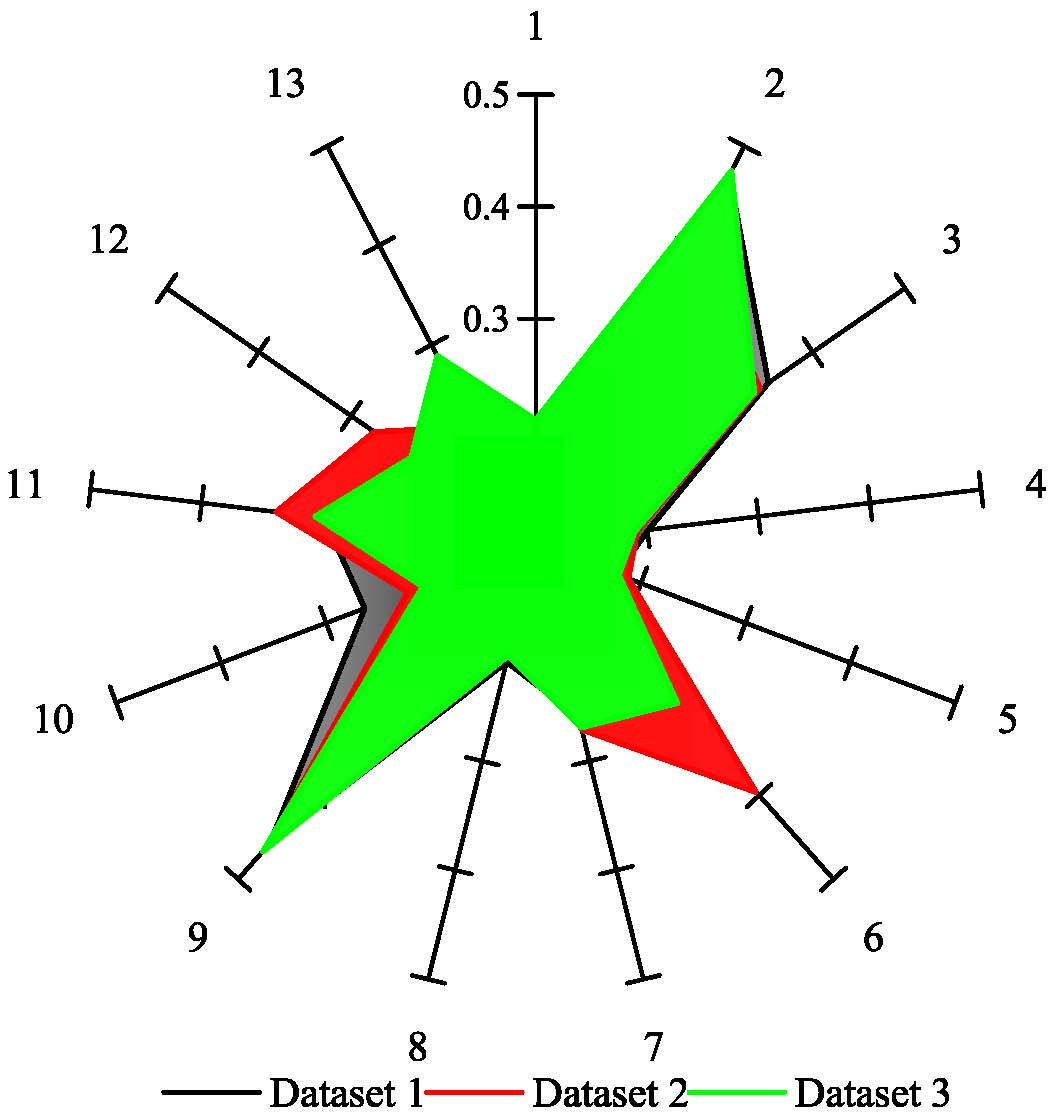}}
      \subfigure[Feature distribution of 10 sub-datasets belonging to ${D}_{tr_b}$.]{\includegraphics[width=1.8in]{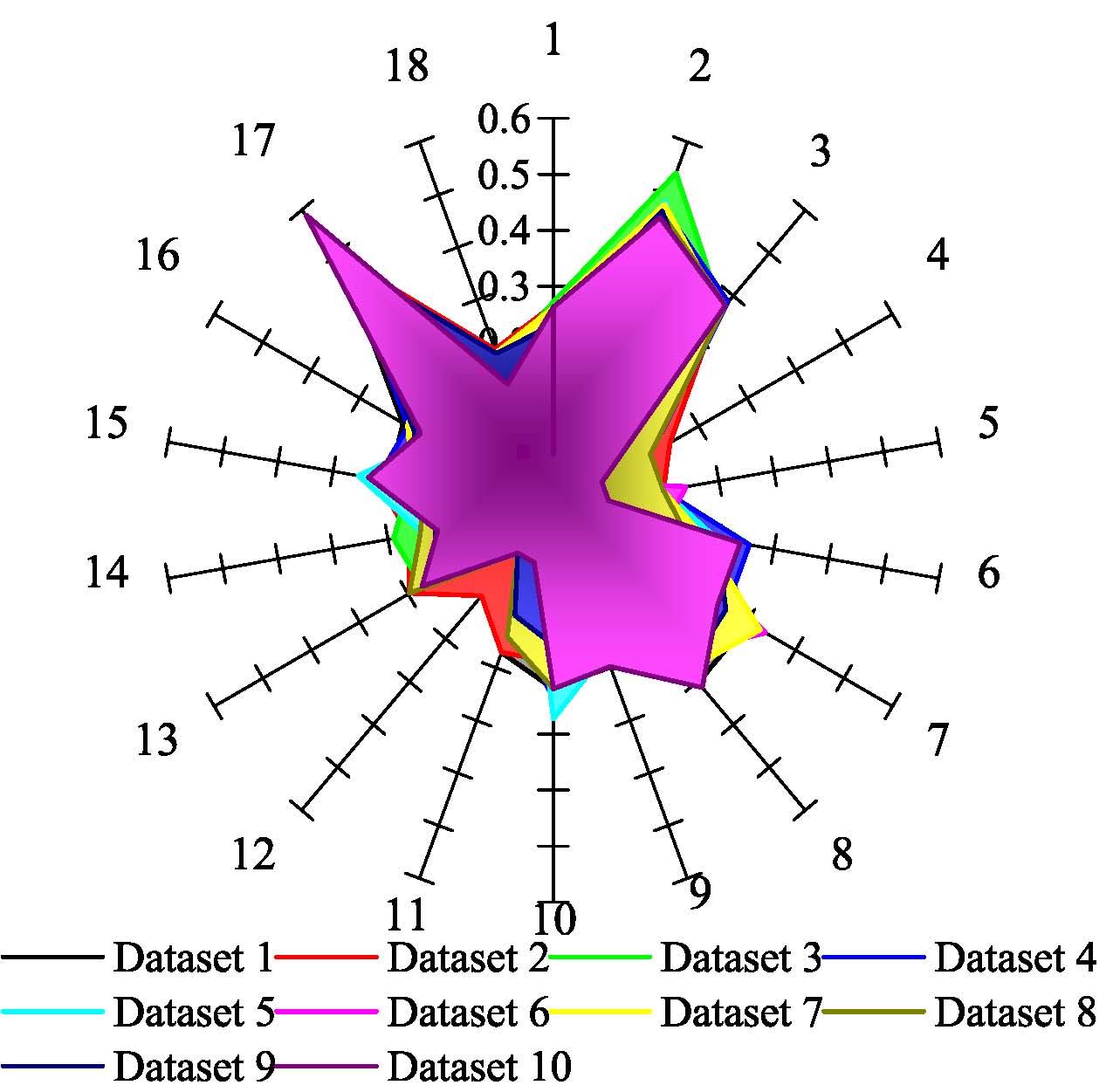}}
      \caption{Non-i.i.d distribution of training datasets.} \label{distribution}
\end{figure}

We use the cross-validation method to split training datasets into the training set ($\frac{2}{3}$), validation and testing set ($\frac{1}{3}$), heart disease dataset (abbr., ${D}_{tr_a}$) is divided into 3 sub-datasets and thyroid disease dataset (abbr., ${D}_{tr_b}$) is divided into 10 sub-datasets for federated learning, where the size of each sub-dataset is 100 samples. In Figs.~\ref{distribution}(a)(b), it shows that Standard Deviation (StdDev) of each feature contained in all sub-datasets of ${D}_{tr_a}$ and ${D}_{tr_b}$, respectively, where StdDev is denoted as the degree of dispersion of samples in a dataset. As presented as Figs.~\ref{distribution}(a)(b), we discover that the non-i.i.d. distribution of all sub-datasets of ${D}_{tr_a}$ and ${D}_{tr_b}$, as there are bias in the feature distributions among sub-datasets.

In our experiments, we choose $|N|=1024$ bits in the multi-key decryption scheme {shown in Section~\ref{PublicKeyCrypto}} to achieve 80-bit security level. Besides, we implemented our SFPA in Java, the experiments are evaluated on the cloud platform (25 VMs from the GRNET Okeanos public cloud\footnote{\url{https://okeanos.grnet.gr/}} with the following characteristics per VM: 1 VCPU, 1 GB RAM, 10 GB Disk, Ubuntu Server 12.04.4 LTS).


\subsection{Performance of Privacy-preserving RF-based Federated Learning}

In the experiment, we first plot the setting of federated learning with 3 DIs in the ideal world. Since the tree number $t$ is an important factor in each RF, we achieve a tradeoff between effectiveness and efficiency by choosing different values of $t$.

\begin{figure}[!ht]
	  \centering
      \subfigure[Federated learning (3 DIs, ${D}_{tr_a}$).]{\includegraphics[width=1.15in]{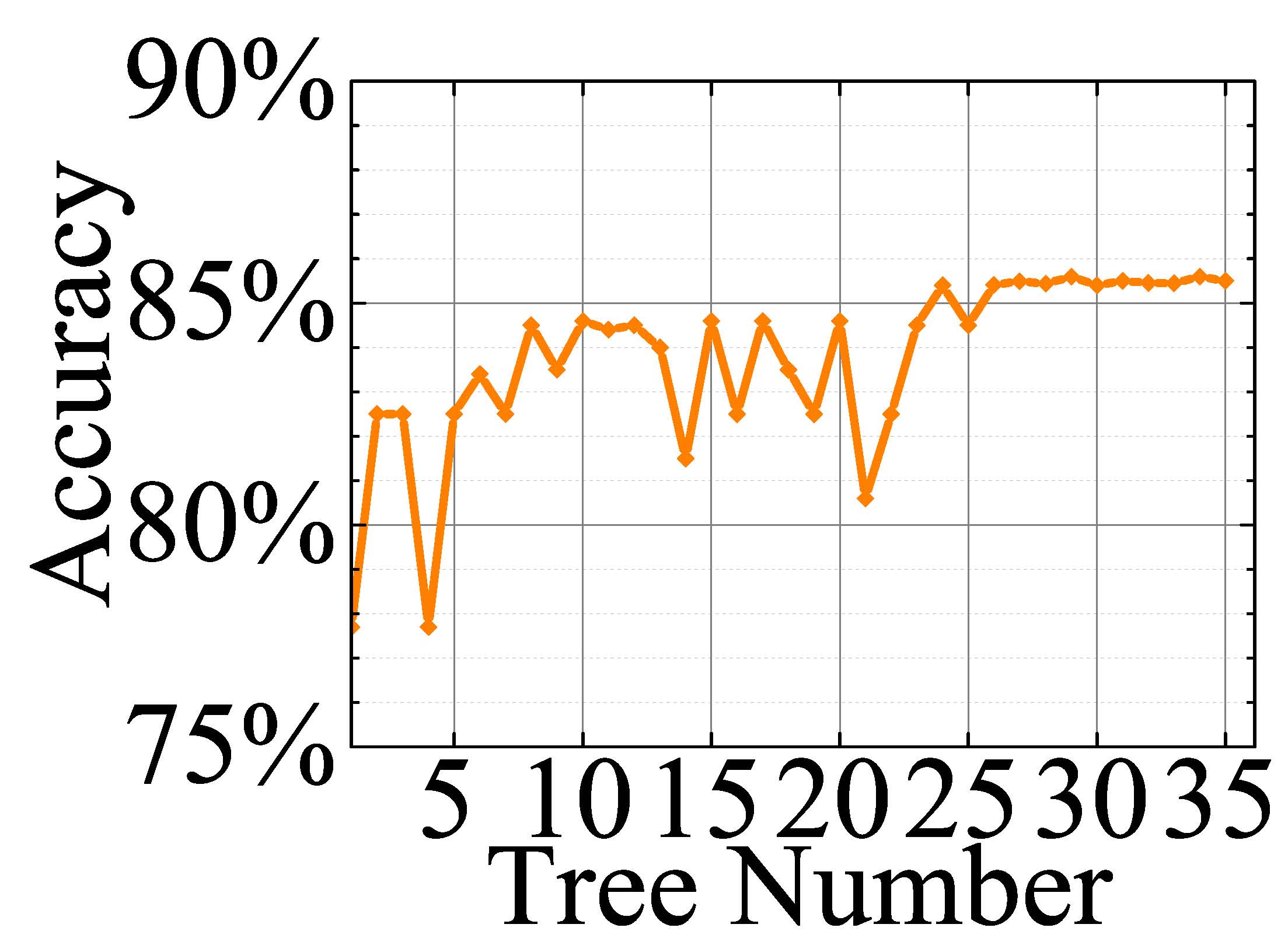}}
      \subfigure[Federated learning (3 DIs, ${D}_{tr_b}$).]{\includegraphics[width=1.15in]{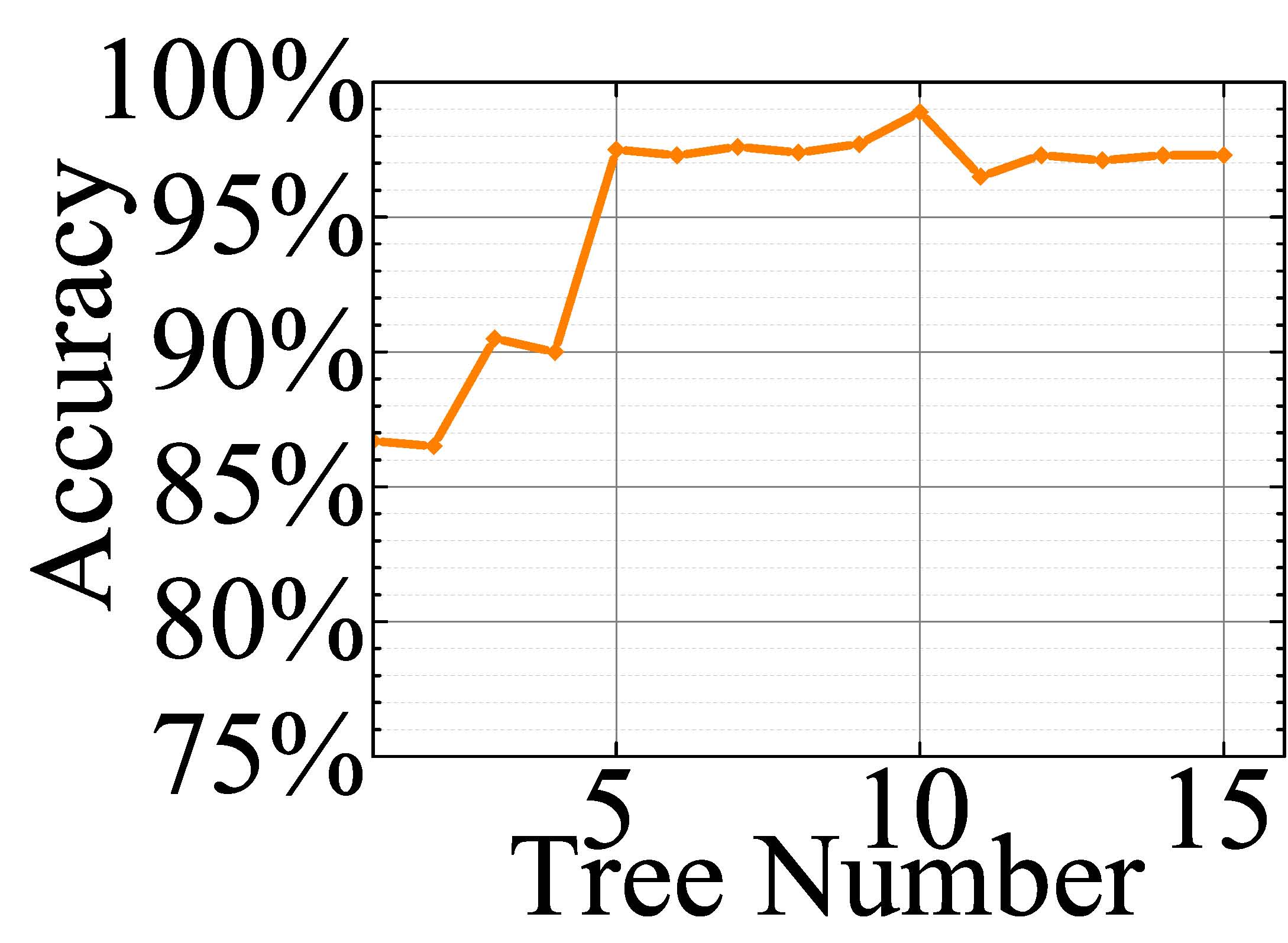}}
      \subfigure[Secure aggregation (3 DIs, ${D}_{tr_a}$).]{\includegraphics[width=1.1in]{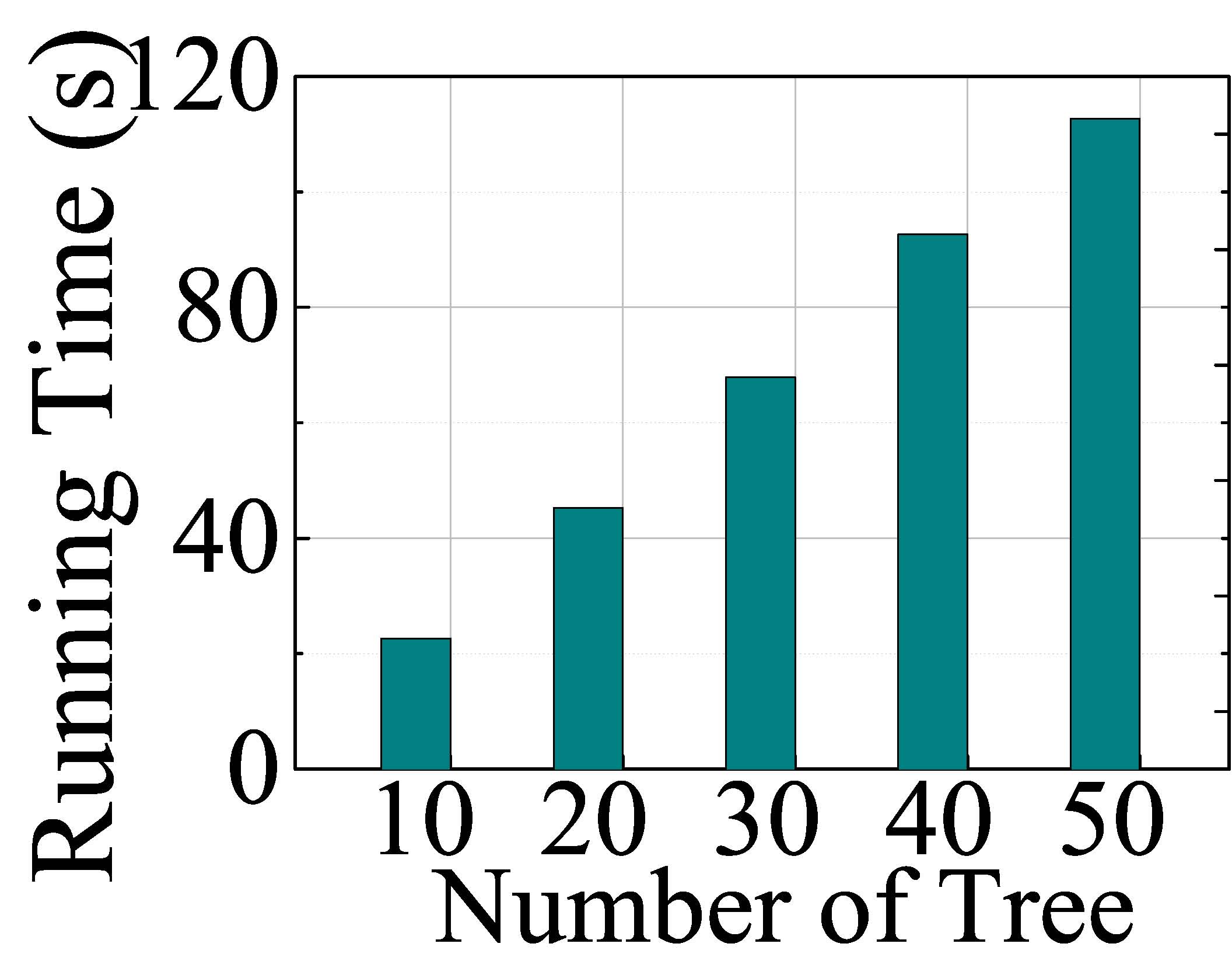}}

      \caption{Performance of SFPA. (a)(b) are the accuracy of the proposed federated learning varying with tree number $t$, where 3 DIs and 10 DIs, respectively. (c) is the running time of secure aggregation varying with $t$, where 3 DIs.} \label{flperformance}
\end{figure}

\subsubsection{Effectiveness Analysis}
As discussed in Figs.~\ref{flperformance}(a)(b), we observe that the greater the value of $t$ is, the higher accuracy outputs. Besides, the accuracy increases slowly and remains stable over ${D}_{tr_a}$ when $t>30$, and the accuracy is $85.6\%$ when $t=30$. Fig.~\ref{flperformance}(b) shows that the accuracy slowly increases remains stable over ${D}_{tr_b}$ when $t>5$, and the accuracy is $97.5\%$ when $t=5$. In Fig.~\ref{flperformance}(c), we notice that the running time of secure aggregation has approximately linear growth with the tree number $t$ over ${D}_{tr_a}$. This is because with the growth of tree number, more trees are required to be aggregated. Thus, it needs more \texttt{STRA} involved to implement secure aggregation. Considering both efficiency and accuracy, we respectively set $t=30$ of ${D}_{tr_a}$, and $t=5$ of ${D}_{tr_b}$ in the following experiments.

\subsubsection{Efficiency Analysis}
Then, we evaluate the efficiency in our implementation over the training dataset ${D}_{tr_a}$:
\begin{itemize}
  \item \textit{DI-side processing}: It costs 0.110 s for locally training in each DI, where each DI adopts a RF model that consists of 30 trees.
  \item \textit{CP-side processing}: It costs 9.84 s for secure aggregation in our federated learning.
\end{itemize}

\subsubsection{Comparison Analysis}
A comparison of both computational overhead (abbr., Comp.) and communication overhead (abbr., Comm.) among our proposed federated learning, Ma \textit{et al.}~\cite{ma2019privacy}, and Aono \textit{et al.}~\cite{aono2017privacy} is conducted in TABLE~\ref{diagnosis_comparedfl}. Both~\cite{ma2019privacy,aono2017privacy} are based on Paillier cryptosystem ($|N|=1024$ bits). In~\cite{aono2017privacy}, the Convolutional Neural Network (CNN) model consists of two convolution layers and two fully connected layers with 128 neurons each layer, the activation function is ReLU function.
As plotted in TABLE~\ref{diagnosis_comparedfl}, the comparison analysis is shown in the following.

\begin{figure}[!ht]
	  \centering
      \subfigure[Comparison between~\cite{ma2019privacy} and SFPA (3 DIs, ${D}_{tr_a}$, $t=1$).]{\includegraphics[width=1.77in]{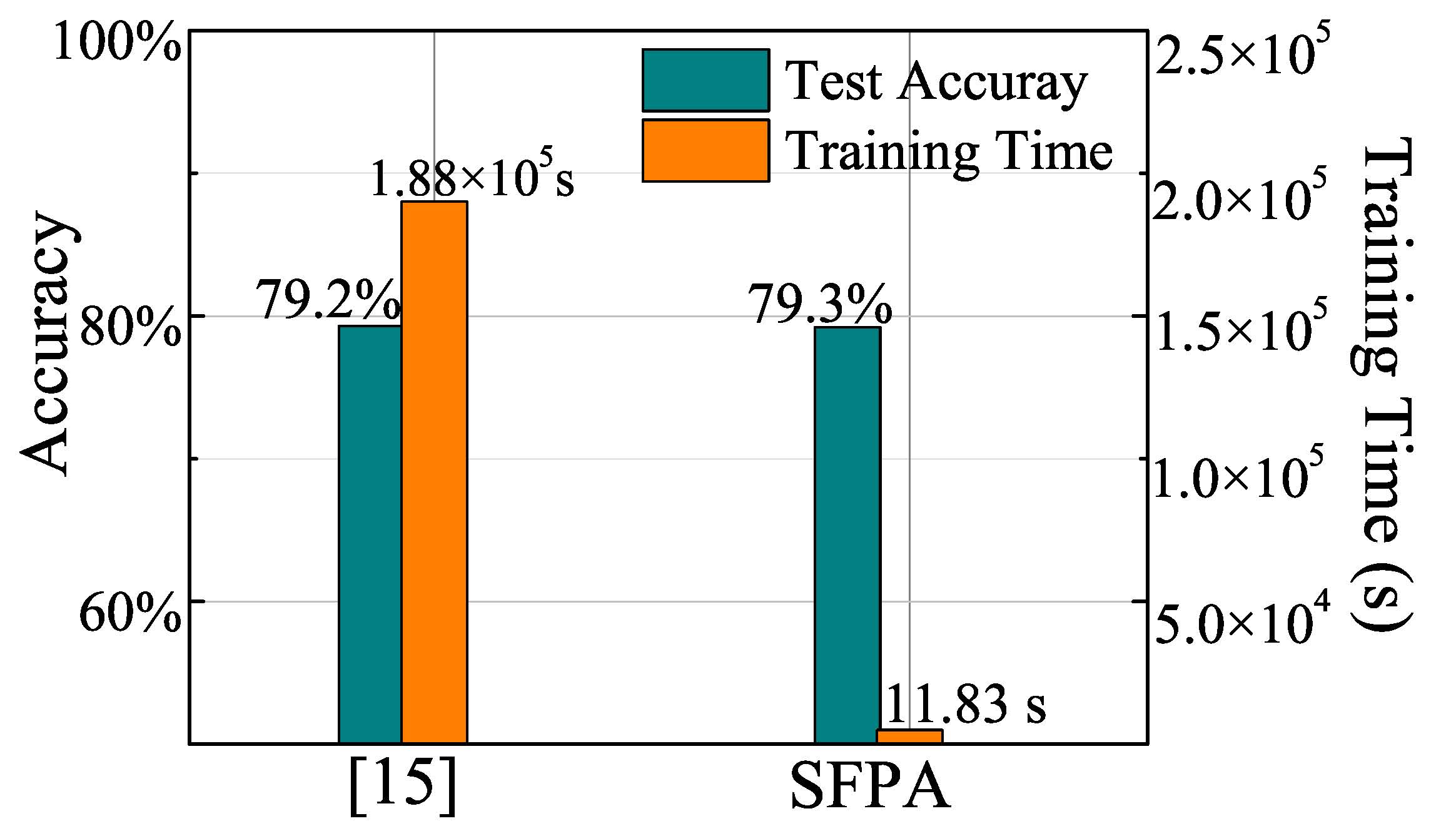}}
      \subfigure[Comparison between~\cite{aono2017privacy} and SFPA (3 DIs, ${D}_{tr_b}$, $t=5$).]{\includegraphics[width=1.68in]{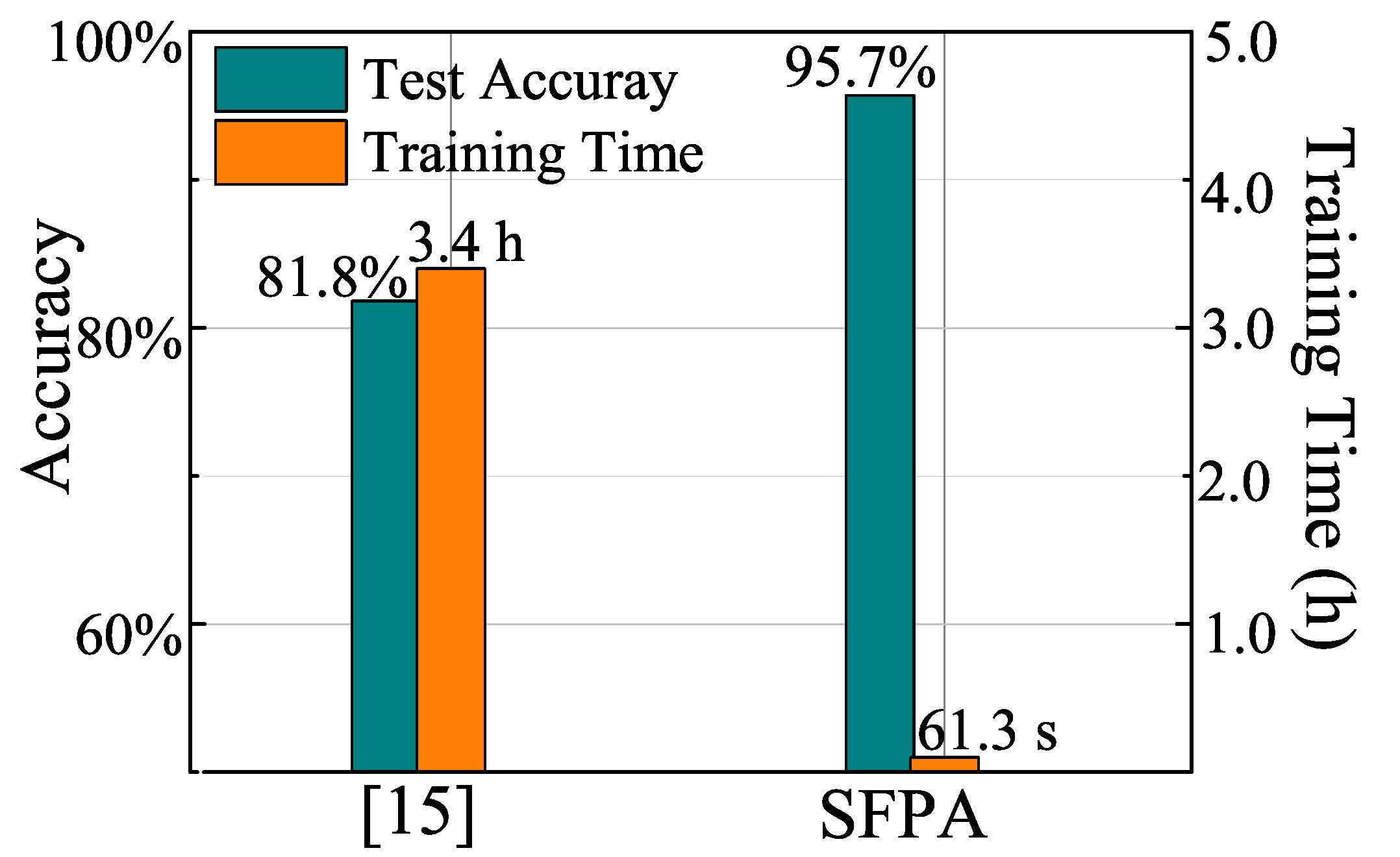}}

      \caption{Comparison analysis of SFPA.} \label{flcomparison}
\end{figure}

{Compared with privacy-preserving data-outsourced random forest scheme~\cite{ma2019privacy}}, we discover that our scheme has a significant improvement of training time and a negligible decrease of communication overhead. The reason is that our method implements local training over plaintexts instead of running over ciphertexts, which obviously reduces ciphertext operations to improve efficiency.

In the actual dataset ${D}_{tr_a}$, we set the tree number $t=1$ to run our federated learning and \cite{ma2019privacy}. As demonstrated in Fig.~\ref{flcomparison}(a), the training time is 11.831 s in our method while the time is $1.88\times 10^5$ s in \cite{ma2019privacy}, and the accuracy is $79.3\%$ in our method while $79.2\%$ in~\cite{ma2019privacy}. Hence, we can conclude that our method brings in an obvious increase of efficiency without a decrease in accuracy.

{Compared with privacy-preserving federated learning based on deep learning~\cite{aono2017privacy}}, we discover that SFPA has a significant improvement of training time and communication overhead. As a CNN model contains hundreds of neurons in~\cite{aono2017privacy}, which not only costs a lot of computational overhead to implement homomorphic addition operations, but also burdens the communication overhead in transmitted CNN models between CP and a DI.

In the actual dataset $D_{tr_b}$, we set the tree number $t=5$ and 3 DIs to run SFPA and~\cite{aono2017privacy}. As demonstrated in Fig.~\ref{flcomparison}(b), the training time is 61.317 s in our method while the time is $3.4$ hours in~\cite{aono2017privacy}, and the accuracy is $95.7\%$ in our method while $81.8\%$ in~\cite{aono2017privacy}. Hence, we can conclude that SFPA brings in an obvious increase in accuracy, since the excellent prediction of random forest over textual data. Besides, the obvious improvement of efficiency is because that compared with a large number of neurons contained in a CNN model of~\cite{aono2017privacy}, a random forest is a lightweight model contained tens of tree nodes, which can significantly improve computational overhead as fewer nodes are involved in secure aggregation.

\subsection{Performance of Secure Defense and Pocket Diagnosis}

\begin{figure*}[!ht]
	  \centering
      \subfigure[3 DIs, $t=30$, ${D}_{tr_a}$.]{\includegraphics[width=1.67in]{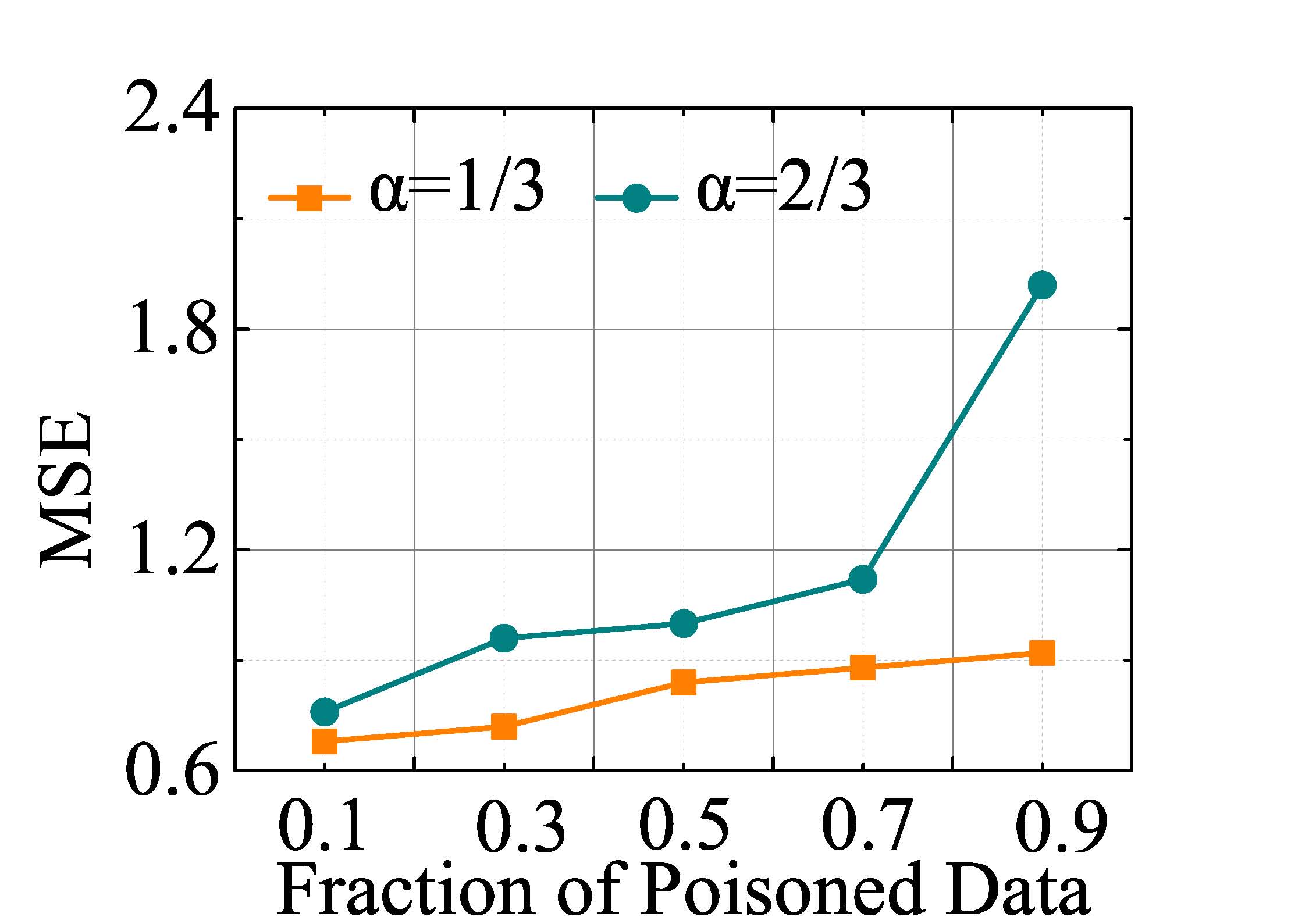}}
      \subfigure[3 DIs, $t=30$, $\alpha=1/3$, ${D}_{tr_a}$.]{\includegraphics[width=1.7in]{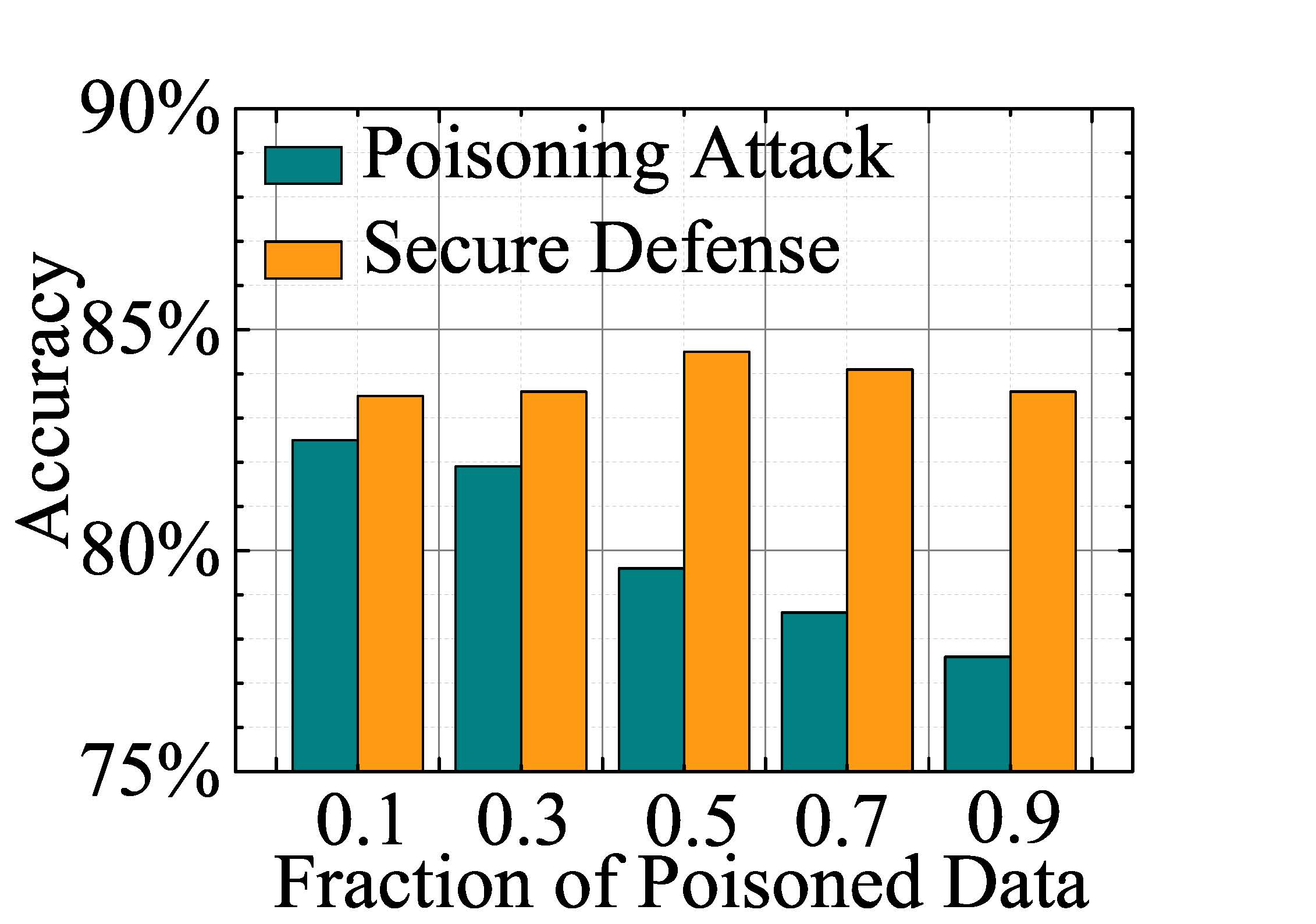}}
      \subfigure[3 DIs, $t=30$, $\alpha=2/3$, ${D}_{tr_a}$.]{\includegraphics[width=1.7in]{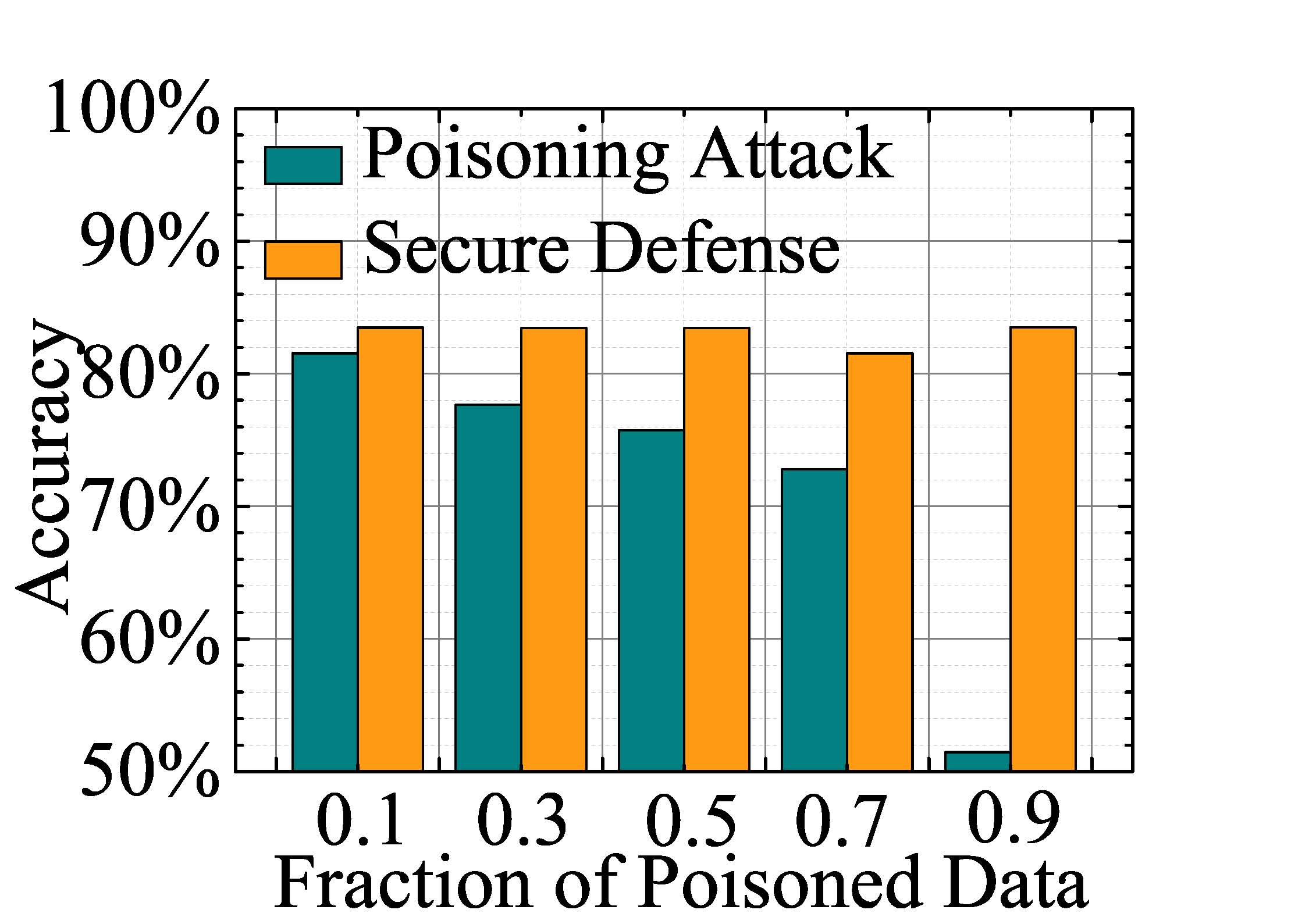}}
     \subfigure[Accuracy with different poisoning attacks, 10 DIs, $t=5$, ${D}_{tr_b}$.]{\includegraphics[width=1.8in]{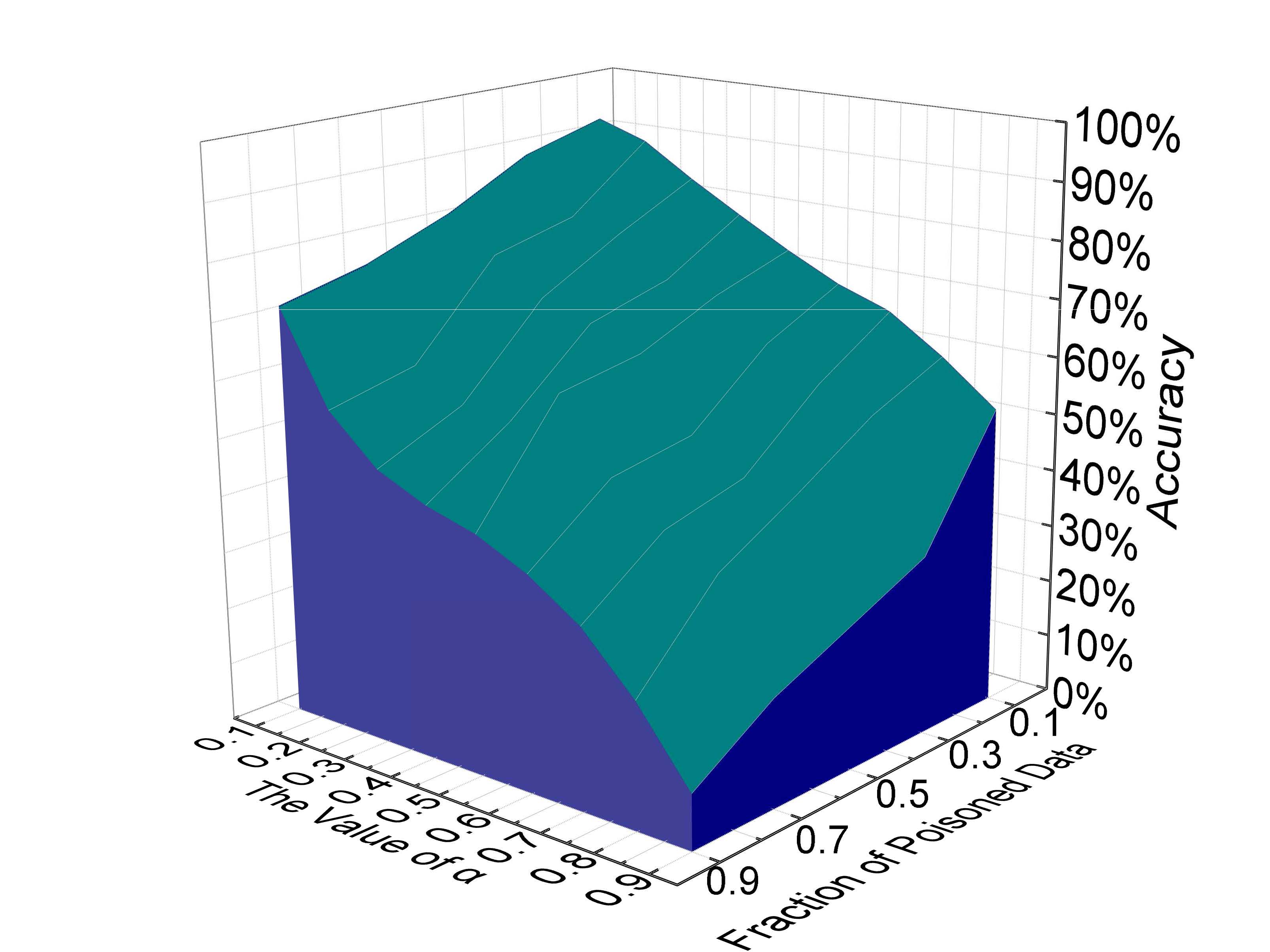}}
     \subfigure[Accuracy with defense strategy, 10 DIs, $t=5$, ${D}_{tr_b}$.]{\includegraphics[width=1.8in]{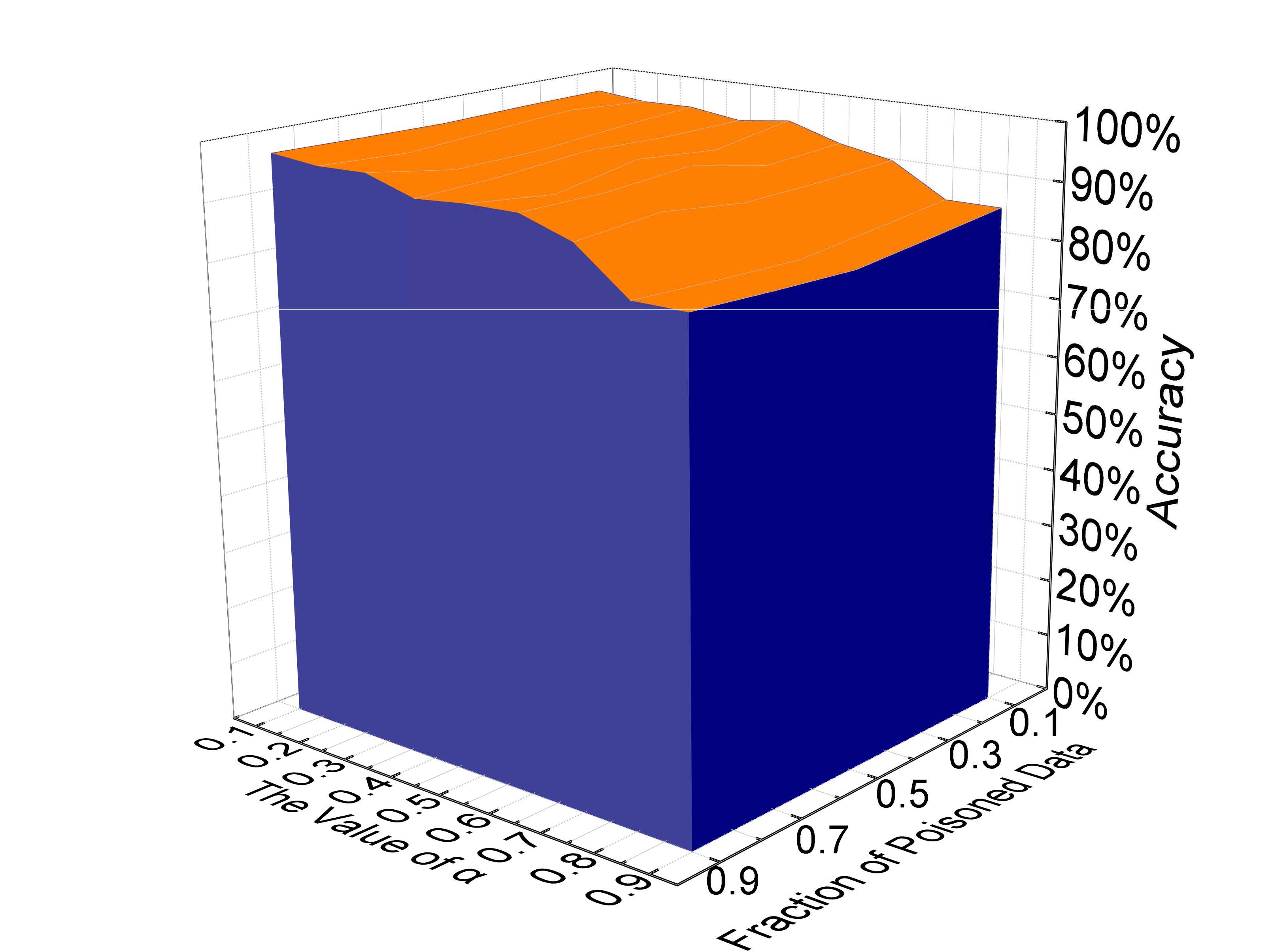}}
       \subfigure[$\beta=0.9$, $\alpha=1/3$, ${D}_{tr_a}$.]{\includegraphics[width=1.75in]{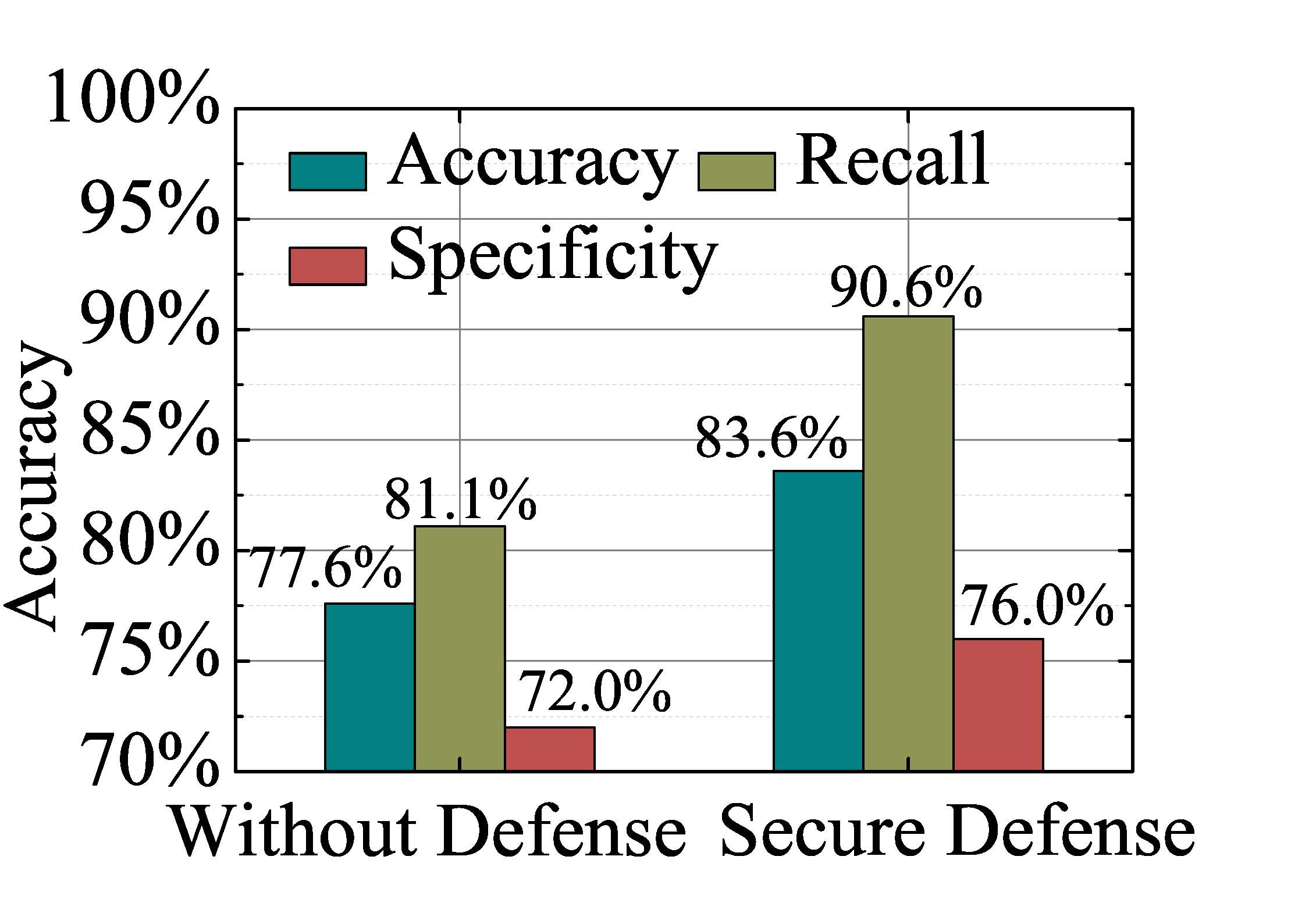}}
       \subfigure[$t=30$, $\alpha=1/3$, ${D}_{tr_a}$.]{\includegraphics[width=1.75in]{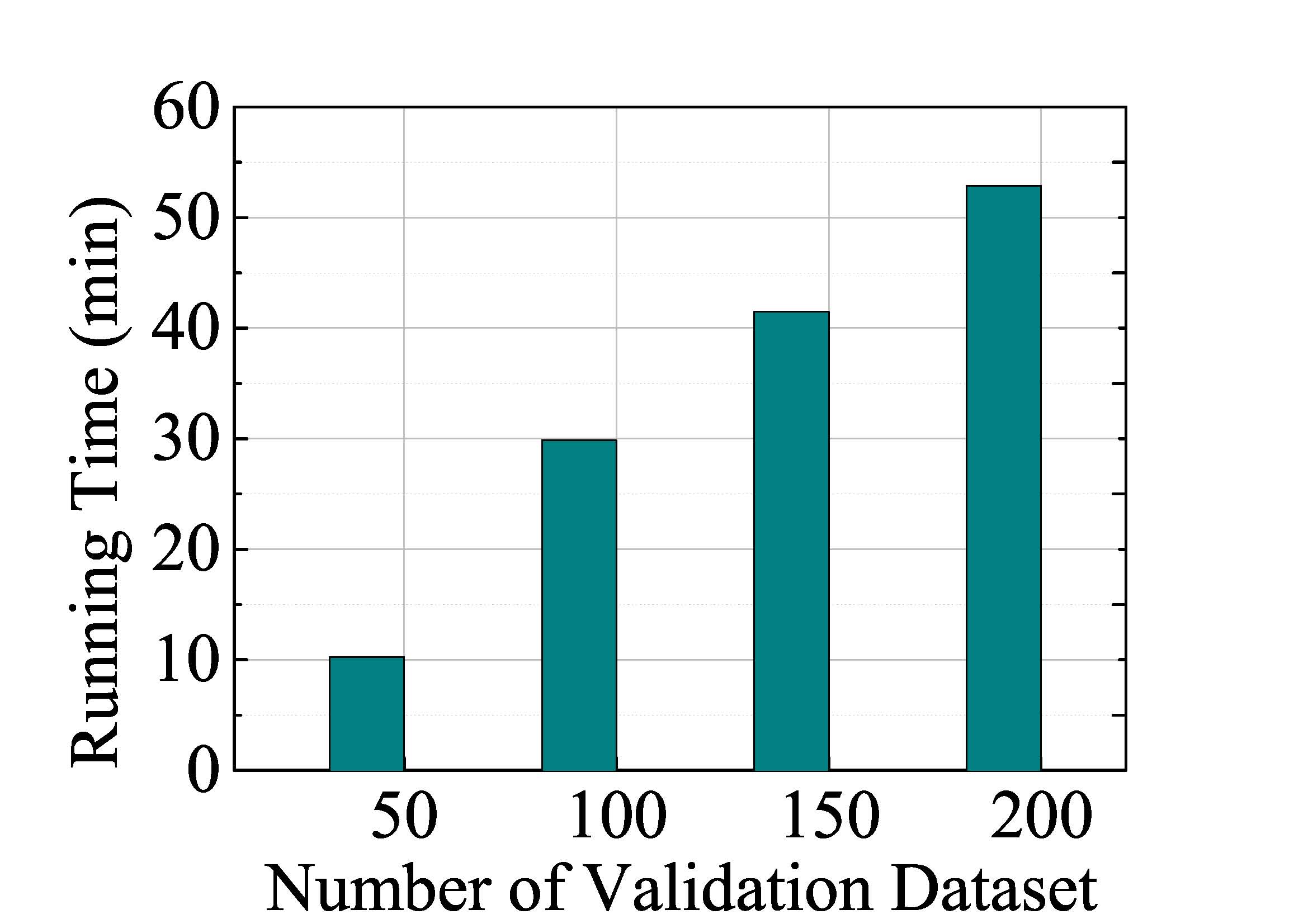}}
 \subfigure[ {$|D_{val}|$=100, $\alpha=1/3$}, ${D}_{tr_a}$.]{\includegraphics[width=1.75in]{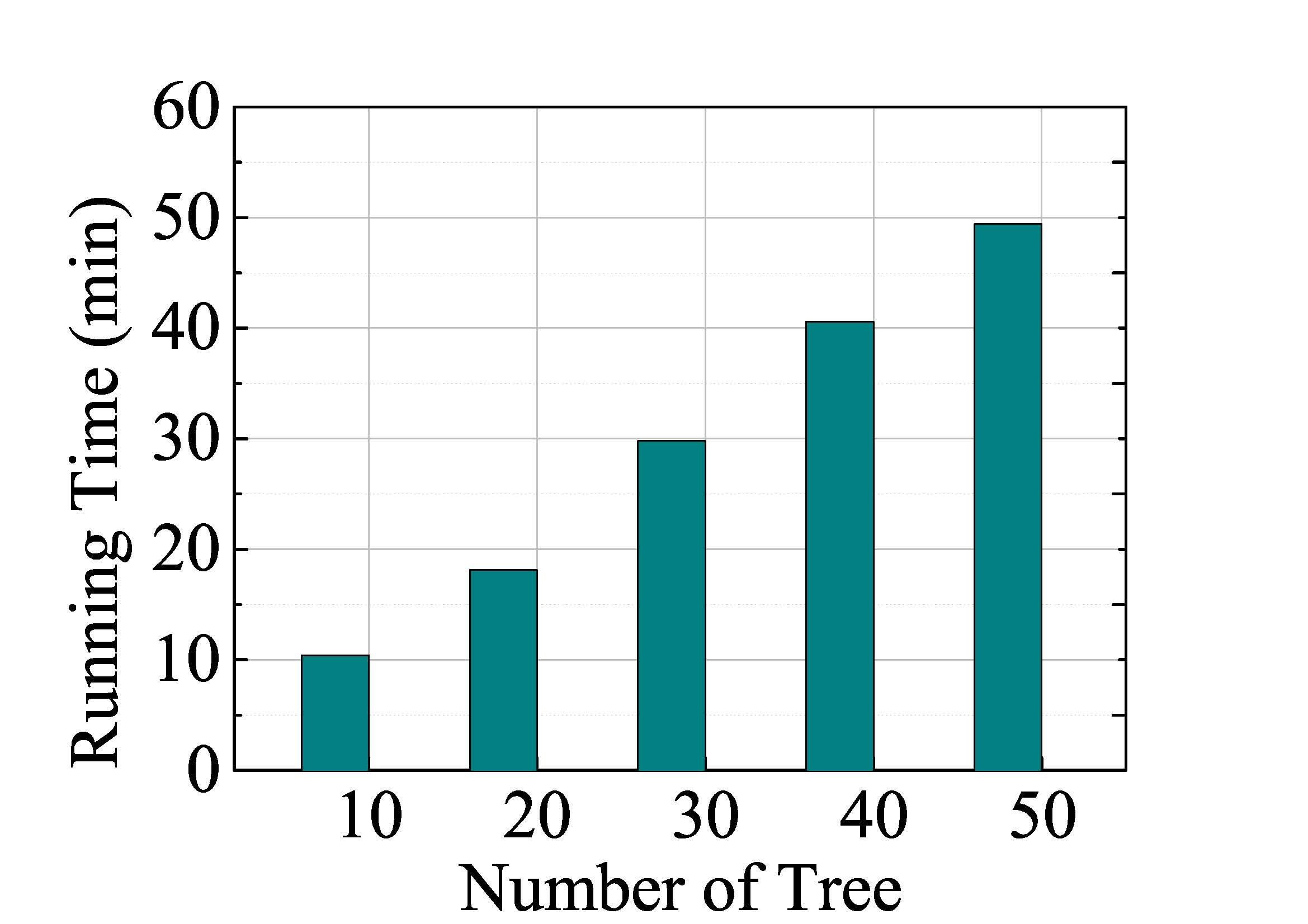}}

      \caption{Performance of secure defense countermeasure. (a) is the MSE of DI-level poisoning attack with different fraction, where 3 DIs and $t=30$. (b)(c) are the comparison of accuracy between secure defense and the proposed federated learning without defense, where the $\alpha=1/3$ and $\alpha=2/3$, respectively. (d) is the accuracy of different $\alpha$ and different fraction of poisoned data. (e) is the accuracy of secure defense with different $\alpha$ and different fraction of poisoned data. (f) is the comparison between secure defense and without defense, where $\beta=0.9$ and $\alpha=1/3$. (g)(h) are running time of secure defense in different situations.} \label{performance}
\end{figure*}

\begin{table*}
\centering
  \scriptsize
\caption{Comparison between our federated learning and ~\cite{ma2019privacy,aono2017privacy} }\label{diagnosis_comparedfl}
 \tabcolsep 21pt 
\begin{tabular*}{7in}{l|c||c||c||c}
\toprule
  \hline
\multicolumn{2}{c||}{Phase} &Our framework &Ma \textit{et al.}~\cite{ma2019privacy} &Aono \textit{et al.}~\cite{aono2017privacy}\\
  \hline
 \multirow{2}{*}{Comp.}& Init. & $|D_{tr}|vh+n(2^h-1){\texttt{T}_\texttt{Enc}}$  & $|D_{tr}| {\texttt{T}_\texttt{Enc}}$&$|Neurons|{\texttt{T}_\texttt{Enc}}$\\
{}& Training&$nt(2^h-1){\texttt{T}_\texttt{STAR}}$&$t(2^h-1)\texttt{T}_{\texttt{total}}$&$n|Neurons| {\texttt{T}_\texttt{Add}}$ \\
  \hline
\hline

\multirow{2}{*}{Comm.}&Init. & $nt(2^h-1)\Im$&$|D_{tr}| \Im$&$|Neurons| \Im$\\
{}& Training&$3n(2^h-1)\Im$&$3t(2^h-1)(\epsilon_1  +\epsilon_2+\epsilon_3) \Im$ &-- \\
  \hline
\bottomrule
 \end{tabular*}
 \begin{tablenotes}
\item \textbf{Notes}. $|D_{tr}|$ is the total number of training data hosted over all DIs, $\Im$ is the bit length of an encrypted number, $v$ is the number of features, $h$ is the height of a tree, $n$ is the total number of DIs. $\texttt{T}_{\texttt{Enc}}$ and $\texttt{T}_{\texttt{STRA}}$ denote the time for \texttt{Enc} and \texttt{STRA}, respectively. Besides, $\texttt{T}_{\texttt{total}}=\epsilon_1  {\texttt{T}_\texttt{Enc}}+\epsilon_2{\texttt{T}_\texttt{Add}+\epsilon_3{\texttt{T}_\texttt{Mul}}}$ are the total computation overhead of secure computation in~\cite{ma2019privacy}, where $\texttt{T}_\texttt{Add}$, ${\texttt{T}_\texttt{Mul}}$ are the time for homomorphic addition and homomorphic multiplication and $\epsilon_1$, $\epsilon_2$, $\epsilon_3$ are operation times of \texttt{Enc}, homomorphic addition and homomorphic multiplication in~\cite{ma2019privacy}, respectively. $|Neurons|$ denotes the number of neurons of a CNN in~\cite{aono2017privacy}. ``--" means no communication overhead is required in training process of~\cite{aono2017privacy}, as the whole process is based on additive homomorphic.
\end{tablenotes}
\end{table*}

We perform the proposed secure defense on DI-level poisoning attacks. Meanwhile, the proposed federated learning is also constructed in the adversarial world. In order to compare the effectiveness of the secure defense, we plot the adversarial setting of federated learning with 3 sub-datasets belonging to ${D}_{tr_a}$ under the setting of 3 DIs, where the poisoning rate is set as $\alpha=1/3$ (\ie one poisoned ${DI}^{\times}$) and $\alpha=2/3$ (\ie two poisoned ${DIs}^{\times}$), and the adversarial setting of federated learning with 10 sub-datasets belonging to ${D}_{tr_b}$ under the setting of 10 DIs, where the poisoning rate is set as $\alpha\in(0,1)$, respectively.
To deploy the DI-level attack, we randomly clone the fraction $\beta$\footnote{To avoid the detection of DI, the attack injects malicious samples within a limited time, thus the fraction $\beta$ is difficult to reach 1.} ($0<\beta<1$) of the local training data of $DI^{\times}$ and flip their labels~\cite{Biggio2012Poisoning} as the attacker's carefully crafted poisonous data injected into $DI^{\times}$'s training data. It is worth noticing that the new label $y^{\times}$ of each poisonous data is obtained by $y^{\times}=0-y$, where $y\in[-1,1]$ is the original data label. We vary the fraction $\beta$ of $DI^{\times}$ from 0.1 to 0.9 with intervals of 0.2 for implementation of DI-level poisoning attack.
\begin{table}
\centering
  \scriptsize
\caption{Computational overhead and communication overhead }\label{analyCom}
\tabcolsep 5pt
\begin{tabular*}{3.5in}{l|c|c}
  \toprule
   \hline
{Phase}  & Computational Overhead &Communication Overhead \\
 \hline
 {ModelBuild}&$\mathcal{O}(vh|D_{tr}|)$ &\multirow{2}{*}{$\mathcal{O}(nt2^h\Im)$} \\
{{ModelEnc}}&$\mathcal{O}(nt2^h\texttt{T}_{\texttt{Enc}})$\\
 \hline
{SecureMSE} &$\mathcal{O}(nth|D_{val}|\texttt{T}_{\texttt{SCOM}})$ &$\mathcal{O}(3nt2^h\Im|D_{val}|)$ \\
{SecureAggregation}&$\mathcal{O}(n^*t2^h\texttt{T}_{\texttt{STRA}})$&$\mathcal{O}(3n^*t2^h\Im)$\\
 \hline
{SecurePrediction}  &$\mathcal{O}(n^*th\texttt{T}_{\texttt{SCOM}})$ &$\mathcal{O}(3n^*th\Im)$ \\
 \hline
 \bottomrule
 \end{tabular*}
 \begin{tablenotes}
\item \textbf{Notes}. $\Im$ is the bit length of an encrypted number, $|D_{tr}|$ is the total number of training data hosted over all DIs, $v$ is the number of features, $h$ is the height of a tree, $n$ is the total number of DIs, and $n^*$ is the number of chosen RF models for federated model. $\texttt{T}_{\texttt{Enc}}$, $\texttt{T}_{\texttt{SCOM}}$ and $\texttt{T}_{\texttt{STRA}}$ denote the time for \texttt{Enc}, \texttt{SCOM} and \texttt{STRA}, respectively.
\end{tablenotes}
\end{table}

\subsubsection{Effectiveness Analysis}
To observe the impact of DI-level poisoning attacks on the federated model over ${D}_{tr_a}$ under the setting of $\alpha=1/3$ and $\alpha=2/3$, we first implement the proposed federated learning over 3 DIs without secure defense. According to the MSE of DI-level poisoning attack shown in Fig.~\ref{performance}(a), we highlight several interesting observations. First, $\beta$ has a positive impact on poisoning the local RF. The larger the value of $\beta$ is, the greater MSE products (the lower accuracy local poisoned RF model trains). Second, with the greater $\alpha$, the bigger MSE outputs. Hence, $\alpha$ also has an important effect on the DI-level poisoning attack.

In the experiments presented below, we evaluate the newly proposed secure defense in our SFPA. The accuracy of the DI-level poisoning attack varying with different fraction of poisonous data is compared with that of secure defense in Figs.~\ref{performance}(b)-(e). As plotted in Figs.~\ref{performance}(b)(c), our secure defense countermeasure can efficiently resist the data poisoning-level attack over the training dataset $D_{tr_a}$, which has a significant improvement compared with the proposed federated learning without defense when $\alpha=1/3$ and $\alpha=2/3$, and works well in the stable accuracy range ($83\%$-$84.5\%$). Besides, the accuracy of SFPA with secure defense does not be influenced by the $\alpha$.

{As plotted in Fig.~\ref{performance}(d), we discover that the accuracy of our proposed federated learning is going down over the training dataset $D_{tr_b}$ with the growth of $\alpha$ and the increase of the fraction of poisoned data. As plotted in Fig.~\ref{performance}(e), we discover that our secure defense countermeasure can efficiently resist the data poisoning-level attack over $D_{tr_b}$ compared with Fig.~\ref{performance}(d) within the stable accuracy range ($84.3\%$-$97.4\%$). }
Besides, in Fig.~\ref{performance}(f), we introduce ``recall" to measure the federated model's capability of classifying positive samples, and employs ``specificity" to measure the federated model's capability of classifying negative samples. When $\beta=0.9$ and $\alpha=1/3$, the recall, specificity and accuracy have obvious improvement in secure defense over training data $D_{tr_a}$ when compared with the federated learning without defense.

\subsubsection{Efficiency Analysis}
As demonstrated in Figs.~\ref{performance}(g)(h), we evaluate the computation cost of the proposed secure defense over training dataset $D_{tr_a}$ with setting $\alpha=1/3$ by varying two main factors, namely tree number $t$ and the size of validation dataset $|D_{val}|$.

\textit{CP-side processing}: As demonstrated in Fig.~\ref{performance}(g), we discover that the running time of secure defense linearly increases with varying the number of validation dataset $|D_{val}|$ from 50 to 200. This is the fact that the larger number of validation data brings in more computation costs caused by calling \textbf{SecureMSE} algorithm. Besides, from Fig.~\ref{performance}(h), we observe that the running time increases when changing the tree number $t$ from 10 to 50. This is because more encrypted trees are involved in the process of \textbf{SecureMSE}, which leads to more secure predictions on an encrypted tree. From the above figures, we discover that the running time of secure defense linearly increases with tree number and the size of validation dataset.

When $|D_{val}|=100$, $\beta=0.5$ and $t=30$, for the CP-side processing, secure defense costs 1793.3 s and the accuracy of pocket diagnosis is stable as $84.5\%$ (recall = $86.8\%$, specificity = $82.0\%$). For the {DU-side processing}, an authorized DU can make a pocket diagnosis in 5.84 s except for the communication overhead.

\subsection{Cost Analysis}

Furthermore, we analyze the complexities in SFPA system, the computational overhead and communication overhead are detailedly listed in TABLE~\ref{analyCom}.
\begin{itemize}
  \item \textit{DI-side processing}: During the process of initialization, it involves $\textbf{ModelBuild}$ and $\textbf{ModelEnc}$, and costs the total time complexity as $\mathcal{O}(t2^h\texttt{T}_{\text{Enc}})$ for local training over multiple DIs.
  \item \textit{CP-side processing}: During the process of secure defense, it involves $\textbf{SecureMSE}$ and $\textbf{SecureAggregation}$, and spends the total time complexity as $\mathcal{O}(nth|D_{val}|\texttt{T}_{\texttt{SCOM}}+n^*t2^h\texttt{T}_{\texttt{STRA}})$ to construct the federated model.
  \item \textit{DU-side processing}: For the diagnosis phase, it involves $\textbf{SecurePrediction}$ to make a diagnosis for a DU's encrypted request, which costs $\mathcal{O}(n^*th\texttt{T}_{\texttt{SCOM}})$ for a diagnosis.
\end{itemize}

\section{Conclusion}\label{sec:conclusion}
This paper has proposed the privacy-preserving random forest-based federated learning with secure defense against DI-level poisoning attacks, which can achieve real-time and accurate pocket diagnosis. According to multi-key secure computation, the proposed SFPA could not only securely construct federated learning based on random forest, but also efficiently defense poisoning attacks without comprising the availability of the federated model. Experimental results over real-world datasets verified the efficiency and security of the SFPA in the federated learning setting.

\section*{Acknowledgments}
This work was supported by the Key Program of NSFC (No. U1405255), the Shaanxi Science \& Technology Coordination \& Innovation Project (No. 2016TZC-G-6-3), the National Natural Science Foundation of China (No. 61702404, No. 61702105, No. U1804263), the Fundamental Research Funds for the Central Universities (No. JB171504, No. JB191506), the National Natural Science Foundation of Shaanxi Province (No. 2019JQ-005), Guangxi Key Laboratory of Cryptography and Information Security (No. GCIS201917).

\bibliographystyle{IEEEtran}
\bibliography{sensordata}

\begin{thebibliography}{10}
\providecommand{\url}[1]{#1}
\csname url@samestyle\endcsname
\providecommand{\newblock}{\relax}
\providecommand{\bibinfo}[2]{#2}
\providecommand{\BIBentrySTDinterwordspacing}{\spaceskip=0pt\relax}
\providecommand{\BIBentryALTinterwordstretchfactor}{4}
\providecommand{\BIBentryALTinterwordspacing}{\spaceskip=\fontdimen2\font plus
\BIBentryALTinterwordstretchfactor\fontdimen3\font minus
  \fontdimen4\font\relax}
\providecommand{\BIBforeignlanguage}[2]{{%
\expandafter\ifx\csname l@#1\endcsname\relax
\typeout{** WARNING: IEEEtran.bst: No hyphenation pattern has been}%
\typeout{** loaded for the language `#1'. Using the pattern for}%
\typeout{** the default language instead.}%
\else
\language=\csname l@#1\endcsname
\fi
#2}}
\providecommand{\BIBdecl}{\relax}
\BIBdecl

\bibitem{kononenko2001machine}
I.~Kononenko, ``Machine learning for medical diagnosis: history, state of the
  art and perspective,'' \emph{Artificial Intelligence in medicine}, vol.~23,
  no.~1, pp. 89--109, 2001.

\bibitem{ZXLYZL18}
Y.~Zhang, C.~Xu, H.~Li, K.~Yang, J.~Zhou, and X.~Lin, ``Health{D}ep: An
  efficient and secure deduplication scheme for cloud-assisted ehealth
  systems,'' \emph{IEEE Trans. Industrial Informatics}, vol.~14, no.~9, pp.
  4101--4112, 2018.

\bibitem{wang2019privacy}
X.~Wang, J.~Ma, Y.~Miao, X.~Liu, and R.~Yang, ``Privacy-preserving diverse
  keyword search and online pre-diagnosis in cloud computing,'' \emph{IEEE
  Transactions on Services Computing}, 2019.

\bibitem{friedman2010achieving}
C.~P. Friedman, A.~K. Wong, and D.~Blumenthal, ``Achieving a nationwide
  learning health system,'' \emph{Science translational medicine}, vol.~2,
  no.~57, pp. 57cm29--57cm29, 2010.

\bibitem{homem2008rates}
T.~Homem-de Mello, ``On rates of convergence for stochastic optimization
  problems under non--independent and identically distributed sampling,''
  \emph{SIAM Journal on Optimization}, vol.~19, no.~2, pp. 524--551, 2008.

\bibitem{miao2019fair}
Y.~Miao, X.~Liu, K.-K.~R. Choo, R.~H. Deng, H.~Wu, and H.~Li, ``Fair and
  dynamic data sharing framework in cloud-assisted internet of everything,''
  \emph{IEEE Internet of Things Journal}, 2019.

\bibitem{miao2019optimized}
Y.~Miao, R.~Deng, K.-K.~R. Choo, X.~Liu, J.~Ning, and H.~Li, ``Optimized
  verifiable fine-grained keyword search in dynamic multi-owner settings,''
  \emph{IEEE Transactions on Dependable and Secure Computing}, 2019.

\bibitem{miao2019secure}
Y.~Miao, Q.~Tong, K.-K.~R. Choo, X.~Liu, R.~H. Deng, and H.~Li, ``Secure
  online/offline data sharing framework for cloud-assisted industrial internet
  of things,'' \emph{IEEE Internet of Things Journal}, 2019.

\bibitem{miao2019privacy}
Y.~Miao, X.~Liu, K.-K.~R. Choo, R.~H. Deng, J.~Li, H.~Li, and J.~Ma,
  ``Privacy-preserving attribute-based keyword search in shared multi-owner
  setting,'' \emph{IEEE Transactions on Dependable and Secure Computing}, 2019.

\bibitem{liu2018efficient}
X.~Liu, K.-K.~R. Choo, R.~H. Deng, R.~Lu, and J.~Weng, ``Efficient and
  privacy-preserving outsourced calculation of rational numbers,'' \emph{IEEE
  transactions on dependable and secure computing}, vol.~15, no.~1, pp. 27--39,
  2018.

\bibitem{miao2019multi}
Y.~Miao, R.~Deng, X.~Liu, K.-K.~R. Choo, H.~Wu, and H.~Li, ``Multi-authority
  attribute-based keyword search over encrypted cloud data,'' \emph{IEEE
  Transactions on Dependable and Secure Computing}, 2019.

\bibitem{miao2018enabling}
Y.~Miao, J.~Weng, X.~Liu, K.-K.~R. Choo, Z.~Liu, and H.~Li, ``Enabling
  verifiable multiple keywords search over encrypted cloud data,''
  \emph{Information Sciences}, vol. 465, pp. 21--37, 2018.

\bibitem{gdpr}
E.~Union, ``General data protection regulation,'' 2018,
  \url{https://gdpr-info.eu/}.

\bibitem{Liu2018Secure}
Y.~Liu, T.~Chen, and Q.~Yang, ``Secure federated transfer learning,''
  \emph{ArXiv}, vol. abs/1812.03337, 2018.

\bibitem{Dai2018Privacy}
W.~Dai, S.~Wang, H.~Xiong, and X.~Jiang, ``Privacy preserving federated big
  data analysis,'' in \emph{Guide to Big Data Applications}.\hskip 1em plus
  0.5em minus 0.4em\relax Springer, 2018, pp. 49--82.

\bibitem{Segev2016Learn}
N.~Segev, M.~Harel, S.~Mannor, K.~Crammer, and R.~El-Yaniv, ``Learn on source,
  refine on target: a model transfer learning framework with random forests,''
  \emph{IEEE transactions on pattern analysis and machine intelligence},
  vol.~39, no.~9, pp. 1811--1824, 2016.

\bibitem{brisimi2018federated}
T.~S. Brisimi, R.~Chen, T.~Mela, A.~Olshevsky, I.~C. Paschalidis, and W.~Shi,
  ``Federated learning of predictive models from federated electronic health
  records,'' \emph{International journal of medical informatics}, vol. 112, pp.
  59--67, 2018.

\bibitem{aono2017privacy}
Y.~Aono, T.~Hayashi, L.~Wang, S.~Moriai \emph{et~al.}, ``Privacy-preserving
  deep learning via additively homomorphic encryption,'' \emph{IEEE
  Transactions on Information Forensics and Security}, vol.~13, no.~5, pp.
  1333--1345, 2017.

\bibitem{mcmahan2016communication}
H.~B. McMahan, E.~Moore, D.~Ramage, S.~Hampson \emph{et~al.},
  ``Communication-efficient learning of deep networks from decentralized
  data,'' \emph{arXiv preprint arXiv:1602.05629}, 2016.

\bibitem{wang2019beyond}
Z.~Wang, M.~Song, Z.~Zhang, Y.~Song, Q.~Wang, and H.~Qi, ``Beyond inferring
  class representatives: User-level privacy leakage from federated learning,''
  in \emph{Proc. IEEE Conference on Computer Communications
  (INFOCOM'19)}.\hskip 1em plus 0.5em minus 0.4em\relax IEEE, 2019, pp.
  2512--2520.

\bibitem{xu2019verifynet}
G.~Xu, H.~Li, S.~Liu, K.~Yang, and X.~Lin, ``Verifynet: Secure and verifiable
  federated learning,'' \emph{IEEE Transactions on Information Forensics and
  Security}, vol.~15, pp. 911--926, 2019.

\bibitem{gilpin2018explaining}
L.~H. Gilpin, D.~Bau, B.~Z. Yuan, A.~Bajwa, M.~Specter, and L.~Kagal,
  ``Explaining explanations: An overview of interpretability of machine
  learning,'' in \emph{IEEE Conference on Data Science and Advanced Analytics
  (DSAA'18)}.\hskip 1em plus 0.5em minus 0.4em\relax IEEE, 2018, pp. 80--89.

\bibitem{nishio2018computer}
M.~Nishio, M.~Nishizawa, O.~Sugiyama, R.~Kojima, M.~Yakami, T.~Kuroda, and
  K.~Togashi, ``Computer-aided diagnosis of lung nodule using gradient tree
  boosting and bayesian optimization,'' \emph{PloS one}, vol.~13, no.~4, p.
  e0195875, 2018.

\bibitem{Suciu2018MLF}
O.~Suciu, R.~M\u{a}rginean, Y.~Kaya, H.~Daum{\'e}, III, and T.~Dumitra\c{s},
  ``When does machine learning fail? generalized transferability for evasion
  and poisoning attacks,'' in \emph{Proc. USENIX Conference on Security
  Symposium (USENIX Security'18)}, 2018, pp. 1299--1316.

\bibitem{biggio2018wild}
B.~Biggio and F.~Roli, ``Wild patterns: Ten years after the rise of adversarial
  machine learning,'' \emph{Pattern Recognition}, vol.~84, pp. 317--331, 2018.

\bibitem{al2019privacy}
M.~Al-Rubaie and J.~M. Chang, ``Privacy-preserving machine learning: Threats
  and solutions,'' \emph{IEEE Security \& Privacy}, vol.~17, no.~2, pp. 49--58,
  2019.

\bibitem{mozaffari2015systematic}
M.~Mozaffari-Kermani, S.~Sur-Kolay, A.~Raghunathan, and N.~K. Jha, ``Systematic
  poisoning attacks on and defenses for machine learning in healthcare,''
  \emph{IEEE journal of biomedical and health informatics}, vol.~19, no.~6, pp.
  1893--1905, 2015.

\bibitem{bagdasaryan2018backdoor}
E.~Bagdasaryan, A.~Veit, Y.~Hua, D.~Estrin, and V.~Shmatikov, ``How to backdoor
  federated learning,'' \emph{arXiv preprint arXiv:1807.00459}, 2018.

\bibitem{papernot2018sok}
N.~Papernot, P.~McDaniel, A.~Sinha, and M.~P. Wellman, ``Sok: Security and
  privacy in machine learning,'' in \emph{Proc. IEEE European Symposium on
  Security and Privacy (EuroS\&P'18)}.\hskip 1em plus 0.5em minus 0.4em\relax
  IEEE, 2018, pp. 399--414.

\bibitem{bhagoji2018analyzing}
A.~N. Bhagoji, S.~Chakraborty, P.~Mittal, and S.~Calo, ``Analyzing federated
  learning through an adversarial lens,'' \emph{arXiv preprint
  arXiv:1811.12470}, 2018.

\bibitem{klivans2009Learning}
A.~R. Klivans, P.~M. Long, and R.~A. Servedio, ``Learning halfspaces with
  malicious noise,'' \emph{Journal of Machine Learning Research}, vol.~10,
  no.~18, pp. 2715--2740, 2009.

\bibitem{steinhardt2017certified}
J.~Steinhardt, P.~W.~W. Koh, and P.~S. Liang, ``Certified defenses for data
  poisoning attacks,'' in \emph{Advances in neural information processing
  systems}, 2017, pp. 3517--3529.

\bibitem{jagielski2018manipulating}
M.~Jagielski, A.~Oprea, B.~Biggio, C.~Liu, C.~Nita-Rotaru, and B.~Li,
  ``Manipulating machine learning: Poisoning attacks and countermeasures for
  regression learning,'' in \emph{Proc. IEEE Symposium on Security and Privacy
  (S\&P'18)}.\hskip 1em plus 0.5em minus 0.4em\relax IEEE, 2018, pp. 19--35.

\bibitem{konevcny2016federated}
J.~Kone{\v{c}}n{\`y}, H.~B. McMahan, F.~X. Yu, P.~Richt{\'a}rik, A.~T. Suresh,
  and D.~Bacon, ``Federated learning: Strategies for improving communication
  efficiency,'' \emph{arXiv preprint arXiv:1610.05492}, 2016.

\bibitem{liu2016efficient}
X.~Liu, R.~H. Deng, K.-K.~R. Choo, and J.~Weng, ``An efficient
  privacy-preserving outsourced calculation toolkit with multiple keys,''
  \emph{IEEE Transactions on Information Forensics and Security}, vol.~11,
  no.~11, pp. 2401--2414, 2016.

\bibitem{ma2019privacy}
Z.~Ma, J.~Ma, Y.~Miao, and X.~Liu, ``Privacy-preserving and high-accurate
  outsourced disease predictor on random forest,'' \emph{Information Sciences},
  vol. 496, pp. 225--241, 2019.

\bibitem{wang2019neural}
B.~Wang, Y.~Yao, S.~Shan, H.~Li, B.~Viswanath, H.~Zheng, and B.~Y. Zhao,
  ``Neural cleanse: Identifying and mitigating backdoor attacks in neural
  networks,'' in \emph{Proc. IEEE Symposium on Security and Privacy
  (S\&P'19)}.\hskip 1em plus 0.5em minus 0.4em\relax IEEE, 2019.

\bibitem{Biggio2012Poisoning}
B.~Biggio, B.~Nelson, and P.~Laskov, ``Poisoning attacks against support vector
  machines,'' in \emph{Proc. International Conference on Machine Learning
  (ICML'12)}, 2012.

\end{thebibliography}

\begin{IEEEbiography}[{\includegraphics[width=1in,height=1.25in,clip,keepaspectratio]{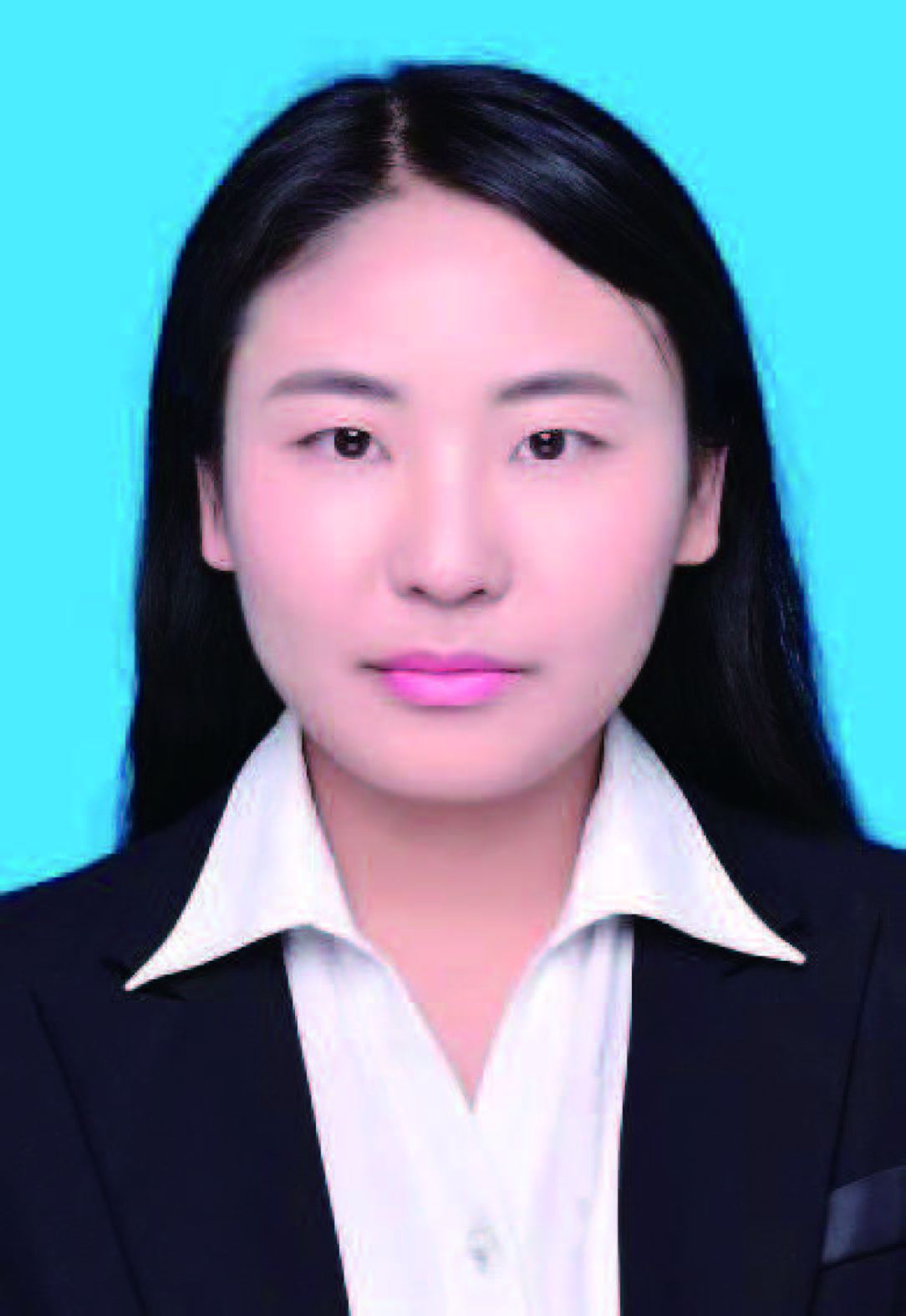}}]{Zhuoran Ma}
received the B.E. degree from the School of Software Engineering, Xidian University, Xi’an, China, in 2017. She is currently a Ph.D candidate with the Department of Cyber Engineering, Xidian University. Her current research interests include data security and secure computation outsourcing.
\end{IEEEbiography}

\begin{IEEEbiography}[{\includegraphics[width=0.9in,height=1.25in,clip,keepaspectratio]{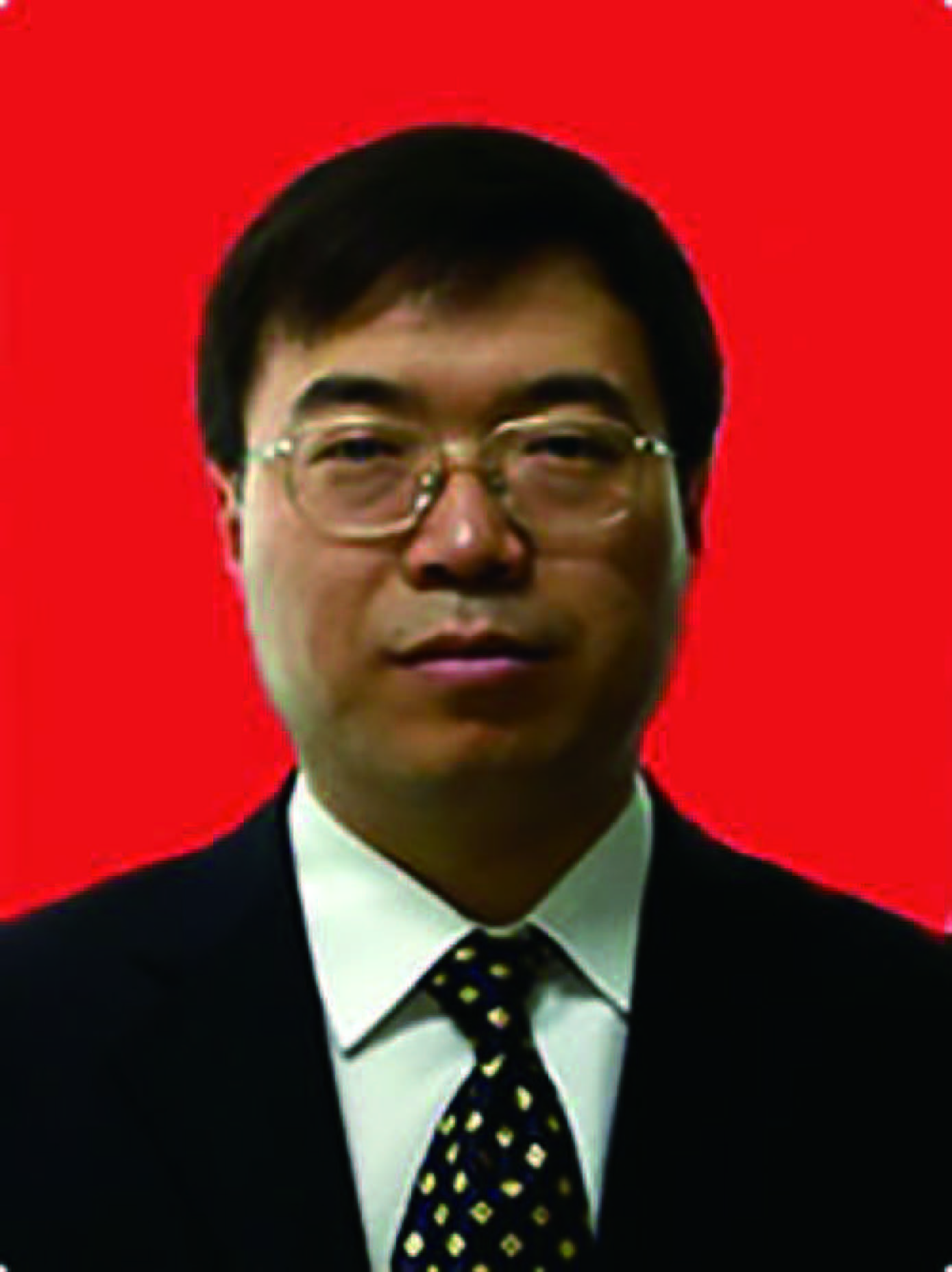}}]{Jianfeng Ma}
 received the B.S. degree in mathematics from Shaanxi Normal University, Xi'an, China, in 1985, and the M.S. degree and the Ph.D. degree in computer software and telecommunication engineering from Xidian University, Xi'an, China, in 1988 and 1995, respectively. From 1999 to 2001, he was a Research Fellow with Nanyang Technological University of Singapore. He is currently a professor and a Ph.D. Supervisor with the Department of Computer Science and Technology, Xidian University, Xi'an, China. He is also the Director of the Shaanxi Key Laboratory of Network and System Security. His current research interests include information and network security, wireless and mobile computing systems, and computer networks.
\end{IEEEbiography}

\begin{IEEEbiography}[{\includegraphics[width=0.9in,height=1.25in,clip,keepaspectratio]{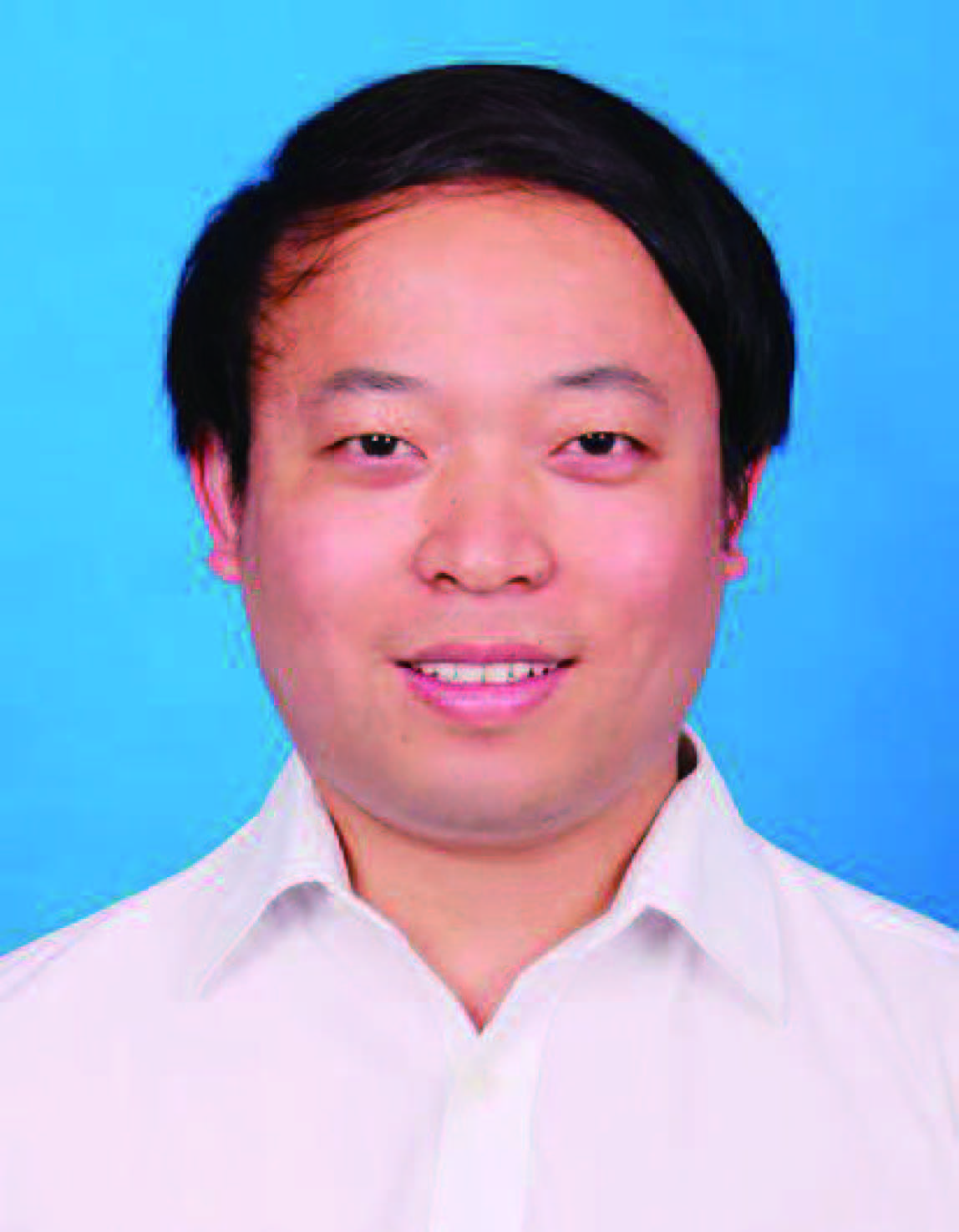}}]{Yinbin Miao}
 received the B.E. degree with the Department of Telecommunication Engineering from Jilin University, Changchun, China, in 2011, and Ph.D. degree with the Department of Telecommunication Engineering from Xidian university, Xi'an, China, in 2016. He is currently a Lecturer with the Department of Cyber Engineering in Xidian university, Xi'an, China. His research interests include information security and applied cryptography.
\end{IEEEbiography}

\begin{IEEEbiography}[{\includegraphics[width=0.9in,height=1.25in,clip,keepaspectratio]{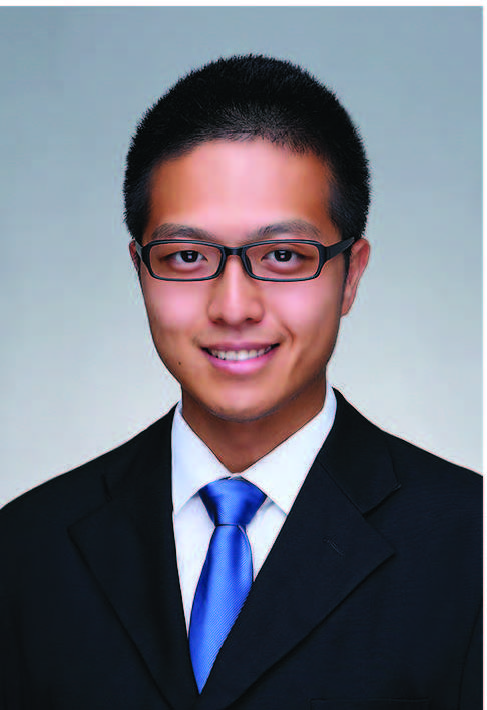}}]{Ximeng Liu} (S’13-M’16) received the B.Sc. degree in electronic engineering from Xidian University, Xi’an, China, in 2010 and the Ph.D. degree in Cryptography from Xidian University, China, in 2015. Now he is the full professor in the College of Mathematics and Computer Science, Fuzhou University. Also, he was a research fellow at the School of Information System, Singapore Management University, Singapore. He has published more than 200 papers on the topics of cloud security and big data security including papers in IEEE TOC, IEEE TII, IEEE TDSC, IEEE TSC, IEEE IoT Journal, and so on. He awards “Minjiang Scholars” Distinguished Professor, “Qishan Scholars” in Fuzhou University, and ACM SIGSAC China Rising Star Award (2018). His research interests include cloud security, applied cryptography and big data security. He is a member of the IEEE, ACM, CCF.
\end{IEEEbiography}

\begin{IEEEbiography}[{\includegraphics[width=1in,height=1.25in,clip,keepaspectratio]{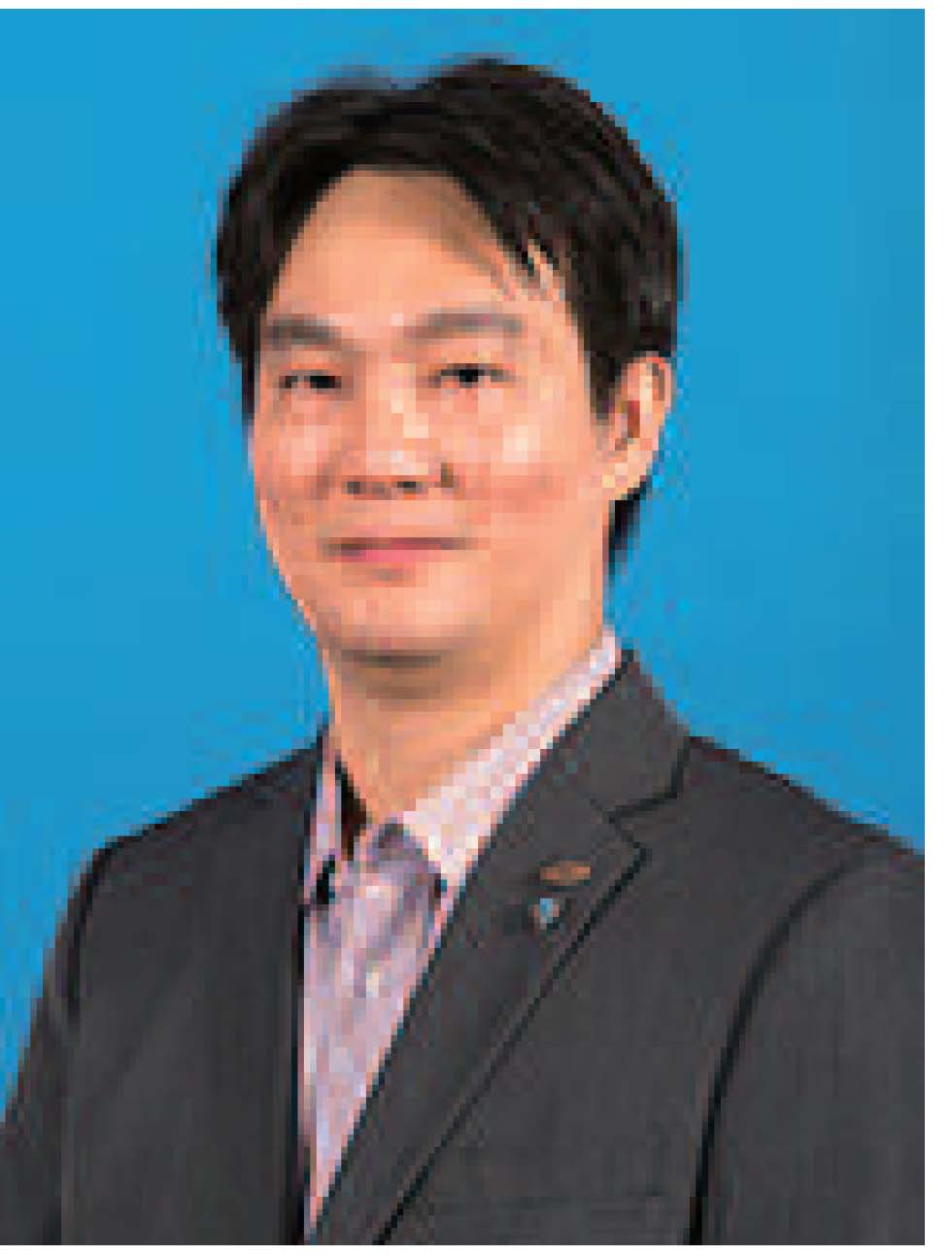}}]{Kim-Kwang Raymond Choo}
(SM'15) received the Ph.D. in Information Security in 2006 from Queensland University of Technology, Australia. He currently holds the Cloud Technology Endowed Professorship at The University of Texas at San Antonio (UTSA). He is the recipient of various awards including the UTSA College of Business Col. Jean Piccione and Lt. Col. Philip Piccione Endowed Research Award for Tenured Faculty in 2018, ESORICS 2015 Best Paper Award. He is an Australian Computer Society Fellow, and an IEEE Senior Member.
\end{IEEEbiography}

\begin{IEEEbiography}[{\includegraphics[width=1in,height=1.25in,clip,keepaspectratio]{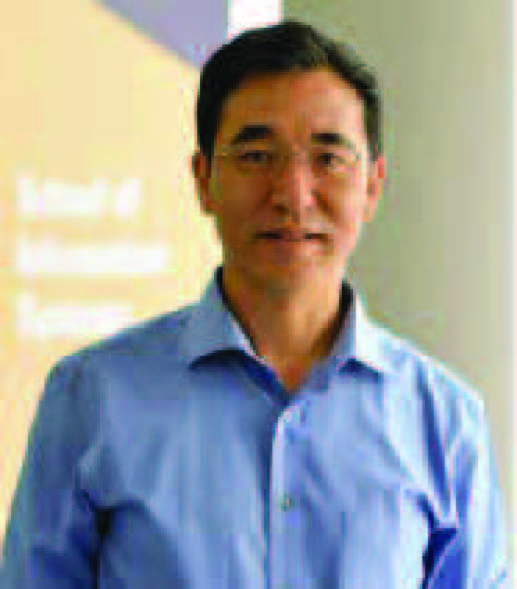}}]{Robert H. Deng}
(F'16) has been a Professor with the School of Information Systems, Singapore Management University, since 2004. His research interests include data security and privacy, multimedia  security, and network and system security. He was  an Associate Editor of the IEEE TRANSACTIONS ON {INFORMATION FORENSICS AND SECURITY} from  2009 to 2012. He is currently an Associate Editor of the IEEE TRANSACTIONS ON DEPENDABLE AND  SECURE COMPUTING and Security and Communication Networks (John Wiley). He is the cochair of the Steering Committee of the ACM Symposium on Information, Computer and Communications Security. He received the University Outstanding Researcher Award from the National University of Singapore in 1999 and the Lee Kuan Yew Fellow for Research Excellence from Singapore  Management University in 2006. He was named Community Service Star and Showcased Senior Information Security Professional by (ISC)$^2$ under its Asia-Pacific Information Security Leadership Achievements Program in 2010.
\end{IEEEbiography}

\end{document}